\title[Geometry and Dynamics of Gaussian Wave Packets and their Wigner Transforms]{Geometry and Dynamics of Gaussian Wave Packets\\and their Wigner Transforms}
\author{Tomoki Ohsawa}
\address{Department of Mathematical Sciences, The University of Texas at Dallas, 800 W Campbell Rd, Richardson, TX 75080-3021, United States}
\email{tomoki@utdallas.edu}
\author{Cesare Tronci}
\address{Department of Mathematics, University of Surrey, Guildford, GU2 7XH, United Kingdom}
\email{c.tronci@surrey.ac.uk}
\date{\today}
\keywords{Gaussian wave packet, Wigner function, momentum maps, Hamiltonian dynamics, Lie--Poisson equation, coadjoint orbit, semiclassical mechanics}
\theoremstyle{plain}
\newtheorem{theorem}{Theorem}[section]
\newtheorem{corollary}[theorem]{Corollary}
\newtheorem{proposition}[theorem]{Proposition}
\theoremstyle{definition}
\theoremstyle{remark}
\newtheorem{remark}[theorem]{Remark}
\def\od#1#2{\dfrac{\mathrm{d}#1}{\mathrm{d}#2}}
\def\pd#1#2{\dfrac{\partial #1}{\partial #2}}
\def\Fd#1#2{\dfrac{\delta #1}{\delta #2}}
\def\tpd#1#2{\partial #1/\partial #2}
\def\parentheses#1{{\left(#1\right)}}
\def\brackets#1{{\left[#1\right]}}
\def\braces#1{{\left\{#1\right\}}}
\def\tr{\mathop{\mathrm{tr}}\nolimits}
\def\pr{\mathop{\mathrm{pr}}\nolimits}
\def\norm#1{{\left\|#1\right\|}}
\def\DS{\displaystyle}
\def\R{\mathbb{R}}
\def\C{\mathbb{C}}
\def\defeq{\mathrel{\mathop:}=}
\def\eqdef{=\mathrel{\mathop:}}
\def\setdef#1#2{{\left\{ #1 \ |\ #2 \right\}}}
\def\ip#1#2{{\left\langle#1,#2\right\rangle}}
\def\tip#1#2{{\langle#1,#2\rangle}}
\def\exval#1{{\left\langle#1\right\rangle}}
\newcommand{\id}{\operatorname{id}}
\renewcommand{\Re}{\operatorname{Re}}
\renewcommand{\Im}{\operatorname{Im}}
\def\eps{\hbar}
\def\Mat{\mathsf{M}}
\def\Sp{\mathsf{Sp}}
\def\Orth{\mathsf{O}}
\def\U{\mathsf{U}}
\def\sp{\mathfrak{sp}}
\def\sym{\mathsf{sym}}
\newenvironment{tbmatrix}{\left[\begin{smallmatrix}}{\end{smallmatrix}\right]}
\def\d{\mathbf{d}}
\def\ins#1{\mathbf{i}_{#1}}
\def\PB#1#2{\left\{#1,#2\right\}}
\newcommand\Ad{\operatorname{Ad}}
\newcommand\ad{\operatorname{ad}}
\def\rmi{{\rm i}}
\begin{document}

\footskip=.6in

\begin{abstract}
  We find a relationship between the dynamics of the Gaussian wave packet and the dynamics of the corresponding Gaussian Wigner function from the Hamiltonian/symplectic point of view.
  The main result states that the momentum map corresponding to the natural action of the symplectic group on the Siegel upper half space yields the covariance matrix of the corresponding Gaussian Wigner function.
  This fact, combined with Kostant's coadjoint orbit covering theorem, establishes a symplectic/Poisson-geometric connection between the two dynamics.
  The Hamiltonian formulation naturally gives rise to corrections to the potential terms in the dynamics of both the wave packet and the Wigner function, thereby resulting in slightly different sets of equations from the conventional classical ones.
  We numerically investigate the effect of the correction term and demonstrate that it improves the accuracy of the dynamics as an approximation to the dynamics of expectation values of observables.
\end{abstract}

\maketitle

\section{Introduction}
\subsection{Background}
Coherent states play a crucial role in quantum dynamics, and their mathematical properties have been exploited over the decades in many different fields, especially quantum optics and chemical physics; see, e.g., \citet{Be2007,BiMo1991,BoTr2016,CoRo2012}.
This is due to the fact that coherent states behave like classical states, in the sense that the expectation values of the quantum canonical operators undergo classical Hamiltonian dynamics; see, e.g., \citet{Go2011,CoRo2012}.
Indeed, it is well known that, for quadratic Hamiltonians $h$ defined on $T^{*}\R^{d} = \R^{2d}$, the time evolution equation of the Wigner function becomes identical to the Liouville equation
\begin{equation}
  \label{eq:Liouville}
  \pd{f}{t} = -\PB{f}{h}_{\R^{2d}}
\end{equation}
for the corresponding classical system, where $\PB{\,\cdot\,}{\,\cdot\,}_{\R^{2d}}$ is the canonical Poisson bracket on $T^{*}\R^{d} = \R^{2d}$, i.e., for any $f, g \in C^{\infty}(\R^{2d})$,
\begin{equation*}
  \PB{f}{g}_{\R^{2d}} \defeq \pd{f}{q^{i}}\pd{g}{p_{i}} - \pd{g}{q^{i}}\pd{f}{p_{i}}
\end{equation*}
using Einstein's summation convention.
Besides their interesting properties relating classical and quantum systems, coherent states have always attracted much attention due to their intriguing geometric properties.
Specifically, coherent states are defined (up to phase factors) as orbits of the representation of the Heisenberg group on the $L^2$ space of wave functions~\cite{Go2011}. In particular, it is customary to select the particular orbit corresponding to the Gaussian wave function arising as the vacuum (or ground) state solution of the harmonic oscillator.
This interpretation of coherent states in terms of group orbits led \citet{Pe1986} to define generalized coherent states in terms of orbits corresponding to other group representations. For example, spin coherent states are orbits of $\mathsf{SU}(2)$ for its natural representation on the space of Pauli spinors.
Also, squeezed coherent states or Gaussian wave packets are orbits of the Lie group---sometimes called the Schr\"odinger group---given by the semidirect product of the metaplectic group and the Heisenberg group \cite{Li1986, Go2011}:
Applying the representation of the Schr\"odinger group on the vacuum state of the harmonic oscillator yields the squeezed coherent state or the Gaussian wave packet, which is among the most studied quantum states in the literature; see e.g., \citet{He1975a,He1976b}, \citet{Li1986}, \citet{Ha1980, Ha1981, Hagedorn1985, Ha1998}, \citet{CoRo2012}.

The emergence of the metaplectic group in the structure of the Gaussian wave packet makes their mathematical study particularly interesting and also somewhat intricate, due to the form of the metaplectic representation~\cite{Go2011,CoRo2012}.
However, in the phase space picture of quantum mechanics, the subtlety of the metaplectic representation disappears and one may work with the corresponding symplectic matrices instead: Indeed, the symplectic group possesses a natural action on functions on the phase space.
The Wigner transform of a Gaussian wave packet is a Gaussian function in the phase space that is entirely characterized by its mean (phase space center) $z$ and symplectic covariance matrix $\Sigma$; see \eqref{eq:Wigner-psi_0} below.
It is common in the literature to describe the dynamics of the mean $z$ by the classical Hamiltonian system and that of the covariance matrix by the congruence transformation $\Sigma \mapsto S\Sigma S^T$ given by the symplectic matrix $S$, which in turn evolves according to the linearization of the classical Hamiltonian system. Upon extending to a more general positive-definite covariance matrix $\Sigma$, this also applies to {\it any} Gaussian Wigner function on phase space~\cite{BoTr2016}.

\subsection{Motivation}
The main focus of this paper is the geometry and dynamics of the Gaussian wave packet
\begin{equation}
  \label{eq:chi}
  \chi_{0}(x) \defeq \exp\braces{ \frac{{\rm i}}{\hbar}\brackets{ \frac{1}{2}(x - q)^{T}(\mathcal{A} + {\rm i}\mathcal{B})(x - q) + p \cdot (x - q) + (\phi + {\rm i}\delta) } }
\end{equation}
and its Wigner transform.
We are particularly interested in establishing a connection between the dynamics of the two in a symplectic/Poisson-geometric manner.

The above Gaussian wave packet~\eqref{eq:chi} is parametrized by $(q,p) \in T^{*}\R^{d} \cong \R^{2d}$, $\phi \in \mathbb{S}^{1}$, $\delta \in \R$, and $\mathcal{C} \defeq \mathcal{A} + {\rm i}\mathcal{B} \in \mathbb{H}_{d}$, where $\mathbb{H}_{d}$ is the set of symmetric $d \times d$ complex matrices (symmetric in the real sense) with positive-definite imaginary parts, i.e.,
\begin{equation}
  \label{eq:H_d}
  \mathbb{H}_{d} \defeq 
  \setdef{ \mathcal{C} = \mathcal{A} + {\rm i}\mathcal{B} \in \Mat(d,\mathbb{C}) }{ \mathcal{A}, \mathcal{B} \in \sym(d,\R),\, \mathcal{B} > 0 },
\end{equation}
and is called the {\em Siegel upper half space}~\cite{Si1943}; $\Mat(d,\mathbb{C})$ and $\sym(d,\R)$ stand for the set of $d \times d$ complex matrices and the set of $d \times d$ symmetric real matrices, respectively.
A practical significance of the Gaussian wave packet~\eqref{eq:chi} is that it is an {\em exact} solution of the time-dependent Schr\"odinger equation with quadratic Hamiltonians if the parameters $(q, p, \mathcal{A}, \mathcal{B}, \phi, \delta)$, as functions of the time, satisfy a certain set of ODEs.
It also possesses other nice properties as approximations to the exact solution; see \citet{He1975a,He1976b} and \citet{Ha1980, Ha1981, Hagedorn1985, Ha1998} and also Section~\ref{ssec:Symplectic_GWPD} below.

Recently, inspired by the work of \citet{Lu2008} and \citet{FaLu2006}, \citet{OhLe2013} described the (reduced) dynamics of the Gaussian wave packet~\eqref{eq:chi} as a Hamiltonian system on $\R^{2d} \times \mathbb{H}_{d}$ (as opposed to just $\R^{2d}$):
One has a symplectic structure on $\R^{2d} \times \mathbb{H}_{d}$ that is naturally induced from the full Schr\"odinger dynamics as well as a Hamiltonian function on $\R^{2d} \times \mathbb{H}_{d}$ given as the expectation value of the Hamiltonian operator $\hat{H}$ with respect to the Gaussian wave packet.

Upon normalization, \eqref{eq:chi} becomes
\begin{equation}
  \label{eq:psi_0}
  \psi_{0}(x) \defeq \frac{\chi_{0}(x)}{\norm{\chi_{0}}} = \parentheses{ \frac{\det\mathcal{B}}{(\pi\hbar)^{d}} }^{1/4} \exp\braces{ \frac{{\rm i}}{\hbar}\brackets{ \frac{1}{2}(x - q)^{T}(\mathcal{A} + {\rm i}\mathcal{B})(x - q) + p \cdot (x - q) + \phi } },
\end{equation}
and its Wigner transform---called the {\em Gaussian state Wigner function} throughout the paper---is also a Gaussian defined on the phase space or the cotangent bundle $T^{*}\R^{d} \cong \R^{2d}$:
\begin{align}
  \label{eq:Wigner-psi_0}
  \mathcal{W}_{\psi_{0}}(\zeta)
  &\defeq \int_{\R^{d}} \mathrm{e}^{-\frac{\rmi}{\hbar} w \cdot y}\, \overline{\psi_{0}(x - y/2)}\, \psi_{0}(x + y/2)\,\mathrm{d}y \nonumber\\
  &= \frac{1}{(\pi\hbar)^{d}} \exp\brackets{ -\frac{1}{\hbar}(\zeta - z)^{T} \sigma(\mathcal{C})^{-1} (\zeta - z) },
\end{align}
where $\zeta \defeq (x,w)$ and $z \defeq (q,p)$ are both in $\R^{2d}$ and $\sigma\colon \mathbb{H}_{d} \to \sym(2d,\R)$ is the covariance matrix defined as
\begin{equation}
  \label{eq:sigma}
  \sigma(\mathcal{C}) \defeq 
  \begin{bmatrix}
    \mathcal{B}^{-1} & \mathcal{B}^{-1} \mathcal{A} \\
    \mathcal{A} \mathcal{B}^{-1} & \mathcal{A} \mathcal{B}^{-1} \mathcal{A} + \mathcal{B}
  \end{bmatrix}.
\end{equation}

Recently, \citet{BoTr2016} discovered a non-canonical Poisson bracket that describes the dynamics of the Gaussian Wigner function\footnote{Note that Gaussian Wigner functions are not always Wigner transforms of Gaussian wave packets: Indeed, Gaussian Wigner functions may describe mixed and pure quantum states depending on the form of the covariance matrix (pure Gaussian states are Gaussian wave packets). As shown by \citet{Li1986}, a Gaussian Wigner function whose covariance matrix is symplectic identifies the Wigner transform of a Gaussian wave packet, while more general forms of the covariance matrix identify mixed Gaussian states \cite{GrVa1988,SiSuMu1987}.}
\begin{equation}
  \label{eq:W_0}
  \mathcal{W}_{0}(\zeta)
  = \frac{1}{(\pi\hbar)^{d}\sqrt{\det \Sigma}} \exp\brackets{ -\frac{1}{\hbar}(\zeta - z)^{T} \Sigma^{-1} (\zeta - z) },
\end{equation}
as a Hamiltonian system with the Hamiltonian function $h(z,\Sigma)$ given by the expectation value
\begin{equation*}
  h(z,\Sigma) = \int \mathcal{W}_0(\zeta)H_{\mathcal{W}}(\zeta)\,\mathrm{d}\zeta,
\end{equation*}
where $H_{\mathcal{W}}(\zeta)$ is the Weyl symbol of the Hamiltonian operator $\hat{H}$.

These recent works \cite{OhLe2013} and \cite{BoTr2016} shed a new light on the dynamics of the Gaussian wave packet and the Gaussian Wigner function.
In fact, these Hamiltonian formulations yield a slightly different form of equations for the phase space variable $z = (q,p)$ from those conventional results in the earlier literature mentioned above:
The symplectic Gaussian wave packet dynamics in \cite{OhLe2013} yields a correction force term in the evolution equation for the momentum $p$ (see \eqref{eq:OhLe-qp} below), and the Hamiltonian dynamics of the Gaussian Wigner function in \cite{BoTr2016} also possesses a similar property.
To put it differently, in the conventional work, the phase space variables $z = (q,p)$ evolves according to a classical Hamiltonian system and is decoupled from the dynamics of $\mathcal{A} + {\rm i}\mathcal{B}$ or $\Sigma$; as a result, the entire system is {\em not} Hamiltonian.
In contrast, \cite{OhLe2013} and \cite{BoTr2016} recast the systems for $(q,p,\mathcal{A},\mathcal{B})$ and $(q,p,\Sigma)$, respectively, as Hamiltonian systems along with the natural symplectic and Poisson structures and Hamiltonians.
These formulations naturally give rise to correction terms as a result of the coupling.

Our main motivation is to unfold the geometry behind the relationship between the Hamiltonian dynamics systems for the variables $(q,p,\mathcal{A},\mathcal{B})$ and $(q,p,\Sigma)$.
Given that both systems are Hamiltonian and require modifications of the conventional picture, it is natural to expect a symplectic/Poisson-geometric connection between them.

\subsection{Main Results and Outline}
This paper exploits ideas from symplectic geometry to build a bridge between the above-mentioned recent works \cite{OhLe2013} and \cite{BoTr2016} on the Gaussian wave packet~\eqref{eq:psi_0} and its Wigner transform~\eqref{eq:Wigner-psi_0}.
The main result, Theorem~\ref{thm:Kostant-Siegel}, states that the momentum map corresponding to the natural action of the symplectic group $\Sp(\R^{2d})$ on the Siegel upper half space $\mathbb{H}_{d}$ gives the covariance matrix~\eqref{eq:sigma} of the Gaussian state Wigner function.
Its consequence, summarized in Corollary~\ref{cor:collective_dynamics} in Section~\ref{sec:collective_dynamics}, is that the dynamics of the covariance matrix---under quadratic potentials---is a collective dynamics, and is hence given by the Lie--Poisson equation on the coadjoint orbits in $\sp(\R^{2d})^{*}$.
Finally, Section~\ref{sec:DynamicsOfSCSWigner} generalizes this result to non-quadratic potentials by relating the geometry and dynamics of the Gaussian wave packets with those of the Gaussian state Wigner functions.
Particularly, Proposition~\ref{prop:Kostant-GWP} relates the symplectic structure (and the Poisson bracket) found in \cite{OhLe2013} with the Poisson bracket found in \cite{BoTr2016}, thereby establishing a geometric link between the two formulations.
We also numerically demonstrate that our dynamics gives a better approximation to the dynamics of expectation values than the classical solutions do.

\section{Hamiltonian Dynamics of Gaussian Wave Packet and Gaussian Wigner Function}
This section gives a brief review of the works~\cite{OhLe2013} and \cite{BoTr2016} mentioned above.

\subsection{Symplectic Structure and Gaussian Wave Packets}
\label{ssec:Symplectic_GWPD}
It is well known that the Gaussian wave packet~\eqref{eq:chi} gives an exact solution to the time-dependent Schr\"odinger equation
\begin{equation}
  \label{eq:Schroedinger}
  {\rm i}\hbar\,\pd{}{t}\psi(t,x) = -\frac{\hbar^{2}}{2m} \Delta \psi(t,x) + V(x)\,\psi(t,x)
\end{equation}
with quadratic potential $V$ if the parameters $(q, p, \mathcal{A}, \mathcal{B}, \phi, \delta) \in \R^{2d} \times \mathbb{H}_{d} \times \mathbb{S}^{1} \times \R$ evolve in time according to a set of ODEs; see, e.g., \citet{He1975a,He1976b} and \citet{Ha1980, Ha1981, Hagedorn1985, Ha1998}.
This set of ODEs is the classical Hamiltonian system 
\begin{equation*}
  \dot{q} = \frac{p}{m},
  \qquad
  \dot{p} = -\pd{V}{q}(q)
\end{equation*}
coupled with additional equations for the rest of the variables $(\mathcal{A}, \mathcal{B}, \phi, \delta)$.

The idea of reformulating this whole set of ODEs for $(q, p, \mathcal{A}, \mathcal{B}, \phi, \delta)$ as a Hamiltonian system has been around for quite a while; see, e.g., \citet{PaSc1994}, \citet{FaLu2006}, and \citet[Section~II.4]{Lu2008}.
\citet{OhLe2013} built on these works from the symplectic-geometric point of view and came up with a Hamiltonian system on $\R^{2d} \times \mathbb{H}_{d} \times \mathbb{S}^{1} \times \R$ with an $\mathbb{S}^{1}$ phase symmetry, and by applying the Marsden--Weinstein reduction~\cite{MaWe1974} (see also \citet[Sections~1.1 and 1.2]{MaMiOrPeRa2007}), obtained the Hamiltonian system
\begin{equation}
  \label{eq:HamiltonianSystem-GWP}
  \ins{X_{H}}\Omega = \d{H}
\end{equation}
on the reduced symplectic manifold $\mathcal{P} \defeq \R^{2d} \times \mathbb{H}_{d} = \{ (q, p, \mathcal{A}, \mathcal{B}) \}$ that is equipped with the symplectic form
\begin{equation}
  \label{eq:Omega}
  \Omega \defeq \d{q^{j}} \wedge \d{p_{j}} + \frac{\hbar}{4}\, \d\mathcal{B}^{-1}_{kl} \wedge \d\mathcal{A}_{kl}.
\end{equation}
Note that we use Einstein's summation convention throughout the paper unless otherwise stated.
Given a Hamiltonian $H\colon \mathcal{P} \to \R$, \eqref{eq:HamiltonianSystem-GWP} determines the vector field $X_{H}$ on $\mathcal{P}$; in coordinates it is written as
\begin{equation*}
  \dot{q} = \pd{H}{p},
  \qquad
  \dot{p} = -\pd{H}{q},
  \qquad
  \dot{\mathcal{A}} = \frac{4}{\hbar}\mathcal{B} \pd{H}{\mathcal{B}} \mathcal{B},
  \qquad
  \dot{\mathcal{B}} = -\frac{4}{\hbar}\mathcal{B} \pd{H}{\mathcal{A}} \mathcal{B},
\end{equation*}
where $\tpd{H}{\mathcal{B}}$ stands for the matrix whose $(m,n)$-entry is $\tpd{H}{\mathcal{B}_{mn}}$.
In our setting, it is natural to select the Hamiltonian $H\colon \mathcal{P} \to \R$ as
\begin{equation}
  \label{eq:H}
  H = \frac{p^{2}}{2m} + V(q)
  + \frac{\hbar}{4}\tr\brackets{ \mathcal{B}^{-1}\parentheses{ \frac{\mathcal{A}^{2} + \mathcal{B}^{2}}{m} + D^{2}V(q) } },
\end{equation}
where $D^{2}V$ denotes the Hessian matrix of the potential $V$.
In fact, it is an $O(\hbar^{2})$ approximation to the expectation value $\bigl\langle \psi_{0}, \hat{H}\psi_{0} \bigr\rangle$ of the Hamiltonian operator $\hat{H}$.
Then we have
\begin{subequations}
  \label{eq:OhLe}
  \begin{gather}
    \label{eq:OhLe-qp}
    \dot{q} = \frac{p}{m},
    \qquad
    \dot{p} = -\pd{}{q}\brackets{ V(q) + \frac{\hbar}{4} \tr\parentheses{ \mathcal{B}^{-1} D^{2}V(q) } },
    \medskip\\
    \label{eq:OhLe-AB}
    \dot{\mathcal{A}} = -\frac{1}{m}(\mathcal{A}^{2} - \mathcal{B}^{2}) - D^{2}V(q),
    \qquad
    \dot{\mathcal{B}} = -\frac{1}{m}(\mathcal{A}\mathcal{B} + \mathcal{B}\mathcal{A}).
  \end{gather}
\end{subequations}
This equation differs from those of \citet{He1975a,He1976b} and \citet{Ha1980, Ha1981, Hagedorn1985, Ha1998} by the $O(\hbar)$ correction term to the potential in the second equation; see \citet{OhLe2013} and \cite{Oh2015b} for the effects of this correction term.

The corresponding Poisson structure $\PB{\,\cdot\,}{\,\cdot\,}_{\mathcal{P}}$ on $\mathcal{P}$ is defined as follows:
For any $F, G \in C^{\infty}(\mathcal{P})$, let $X_{F}$ and $X_{G}$ be the corresponding Hamiltonian vector fields, i.e., $\ins{X_{F}}\Omega = \d{F}$ and similarly for $X_{G}$, then
\begin{equation}
  \label{eq:PB-GWP}
  \PB{F}{G}_{\mathcal{P}} \defeq \Omega(X_{F},X_{G})
  = \PB{F}{G}_{\R^{2d}}
  - \frac{4}{\hbar}\PB{F}{G}_{\mathbb{H}_{d}},
\end{equation}
with
\begin{gather}
  \PB{F}{G}_{\R^{2d}} \defeq \pd{F}{q^{i}} \pd{G}{p_{i}} - \pd{G}{q^{i}} \pd{F}{p_{i}},
  \\
  \PB{F}{G}_{\mathbb{H}_{d}} \defeq
  -\parentheses{
    \pd{F}{\mathcal{B}^{-1}_{jk}} \pd{G}{\mathcal{A}_{jk}} - \pd{G}{\mathcal{B}^{-1}_{jk}} \pd{F}{\mathcal{A}_{jk}}
  },
  \label{eq:PB-H_d}
\end{gather}
where each of the brackets is calculated by holding the remaining variables (that are not involved in the bracket) fixed; we employ this convention throughout the paper to simplify the notation.

\subsection{Lie--Poisson Structure for Gaussian Moments}
The center $z$ and the covariance matrix $\Sigma$ (assumed to be positive definite) of the Gaussian Wigner function $\mathcal{W}_{0}$ from \eqref{eq:W_0} are given in terms of the first two moments of the Gaussian Wigner function \eqref{eq:Wigner-psi_0}, that is\footnote{Notice that, although here we call $\Sigma$ ``covariance matrix'', this differs from the usual definition in statistics (given by ${\exval{\zeta \otimes \zeta}_{0}}- {\exval{\zeta}_{0} \otimes \exval{\zeta}_{0}}$) by an irrelevant multiplicative factor $\hbar/2$.}
\begin{equation}
  \label{eq:moments}
  \exval{\zeta}_{0} = z,
  \qquad
  \frac{1}{2}\parentheses{
    \exval{\zeta \otimes \zeta}_{0}- \exval{\zeta}_{0} \otimes \exval{\zeta}_{0}
  } = \frac{\hbar}{4}\Sigma,
\end{equation}
where we have used the following expectation value notation with respect to $\mathcal{W}_{0}$ as well as more general Wigner function $\mathcal{W}$:
For an observable $a\colon \R^{2d} \to \R$,
\begin{equation}
  \exval{a}_{0} \defeq \int\mathcal{W}_{0}(\zeta) a(\zeta)\,\mathrm{d}\zeta,
  \qquad
  \exval{a} \defeq \int\mathcal{W}(\zeta) a(\zeta)\,\mathrm{d}\zeta.
  \label{eq:exval}
\end{equation}

In \cite{BoTr2016}, the first two moments of the Wigner quasiprobability density $\mathcal{W}(\zeta)$ were characterized as the momentum map corresponding to the action
\begin{equation}
  \label{eq:Jac_action}
  \big({(S,{\sf z},\varphi)}\cdot\mathcal{W}\big)(\zeta)=\mathcal{W}(S\zeta+{\sf z})
  \,,\qquad
  \text{\ with \ }
  \qquad
  S\in \Sp(\R^{2d})\,,\quad 
  ({\sf z},\varphi)\in\mathsf{H}(\R^{2d})\,,
\end{equation}
of the \emph{Jacobi group} $\mathsf{Jac}(\R^{2d}) \defeq \Sp(\R^{2d})\,\circledS\, \mathsf{H}(\R^{2d})$, i.e., the semidirect product of the symplectic group $\Sp(\R^{2d})$ and the Heisenberg group $\mathsf{H}(\R^{2d})$.
This group structure has attracted some attention over the years mainly because of its relation to squeezed coherent states \cite{Be2007} and, more recently, because of its connections to certain integrable geodesic flows on the symplectic group \cite{BlBrIsMaRa2009, HoTr2009}.
Here, the space of quasiprobability densities is equipped with a Poisson structure given by the following Lie--Poisson bracket on the space $\mathfrak{s}^{*}$ of the set of Wigner functions \cite{BiMo1991} (see Appendix~\ref{appendix} for more details):
\begin{equation*}
\{F,K\}_{\mathfrak{s}^{*}}(\mathcal{W})=\int_{\R^{2d}}\!\mathcal{W}(\zeta)\left\{\!\!\left\{\frac{\delta F}{\delta \mathcal{W}},\frac{\delta K}{\delta \mathcal{W}}\right\}\!\!\right\}(\zeta)\,\mathrm{d}\zeta
\,,
\end{equation*}
where $\{\!\{\cdot,\cdot\}\!\}$ denotes the Moyal bracket~\cite{Gr1946, Mo1949}.
In addition, the symplectic group is defined as follows:
\begin{equation*}
  \Sp(\R^{2d}) \defeq
  \setdef{
    S \in \Mat(2d,\R)
  }{ S^{T} \mathbb{J} S = \mathbb{J}}
  \quad\text{with}\quad
  \mathbb{J} =
  \begin{bmatrix}
    0 & I_{d} \\
    -I_{d} & 0
  \end{bmatrix},
\end{equation*}
whereas the Heisenberg group
\begin{equation*}
   \mathsf{H}(\R^{2d}) = \setdef{(\mathsf{z},\varphi) \in \R^{2d+1}}{ \mathsf{z} = (\mathsf{q},\mathsf{p}) \in \R^{2d},\, \varphi \in \R}  
\end{equation*}
is equipped with the multiplication rule
\begin{equation*}
  (\mathsf{z}_{1}, \theta_{1}) \cdot (\mathsf{z}_{2}, \theta_{2})
  = \parentheses{
    \mathsf{z}_{1} + \mathsf{z}_{2}, \theta_{1} + \theta_{2} - \Omega_{\R^{2d}}(\mathsf{z}_{1}, \mathsf{z}_{2})/2
  },
\end{equation*}
where $\Omega_{\R^{2d}}$ is the standard symplectic form on $\R^{2d}$, i.e., setting $\mathsf{z}_{1} = (\mathsf{q}_{1}, \mathsf{p}_{1})$ and $\mathsf{z}_{2} = (\mathsf{q}_{2}, \mathsf{p}_{2})$,
\begin{equation*}
  \Omega_{\R^{2d}}(\mathsf{z}_{1}, \mathsf{z}_{2}) =\mathsf{z}_{1}\cdot \mathbb{J} \mathsf{z}_{2} = \mathsf{q}_{1} \cdot \mathsf{p}_{2} - \mathsf{q}_{2} \cdot \mathsf{p}_{1}.
\end{equation*}
The semidirect product $\Sp(\R^{2d})\,\circledS\, \mathsf{H}(\R^{2d})$ is defined in terms of the natural $\Sp(\R^{2d})$-action on $\mathsf{H}(\R^{2d})$, i.e., $(\mathsf{z},\theta)\mapsto (S\mathsf{z},\theta)$ with $S\in \Sp(\R^{2d})$; as a result, the group multiplication for the Jacobi group is given by
\[
(S,\mathsf{z},\theta)(S',\mathsf{z}',\theta')=(SS',\mathsf{z}+S\mathsf{z}',\theta+\theta' - \Omega_{\R^{2d}}(\mathsf{z}, S\mathsf{z}')/2)
\,.
\]

In the context of Gaussian quantum states, the Jacobi group plays exactly the same role as in classical Liouville (Vlasov) dynamics \cite{GaTr2012}, so that the momentum map structure of the first two Wigner moments has an identical correspondent in the classical case (one simply replaces the Wigner function by a Liouville distribution).
More specifically, the momentum map $\mathbf{J}_{\mathfrak{s}^{*}}\colon \mathfrak{s}^{*} \to \mathfrak{jac}(\R^{2d})^{*}$ corresponding to the action~\eqref{eq:Jac_action} is (see Appendix~\ref{appendix} for a verification)
\begin{equation}
  \label{eq:J_sstar}
  \mathbf{J}_{\mathfrak{s}^{*}}(\mathcal{W}) = \left(\frac{1}2\Bbb{J}^{T}\langle\zeta\otimes\zeta\rangle,\,\Bbb{J}^{T}\langle\zeta\rangle, \,\langle 1\rangle\right).
\end{equation}
Here, $\mathfrak{jac}(\Bbb{R}^{2d})^*$ is the dual of the Lie algebra $\mathfrak{jac}(\Bbb{R}^{2d}) \defeq \mathfrak{sp}(\Bbb{R}^{2d})\,\circledS\, \mathfrak{h}(\Bbb{R}^{2d})$ of $\mathsf{Jac}(\R^{2d})$, with $\mathfrak{sp}(\Bbb{R}^{2d})$ and $\mathfrak{h}(\Bbb{R}^{2d})$ being the Lie algebras of $\Sp(\R^{2d})$ and $\mathsf{H}(\R^{2d})$, respectively.
The momentum map $\mathbf{J}_{\mathfrak{s}^{*}}$ is equivariant and hence is Poisson (see, e.g., \citet[Theorem~12.4.1]{MaRa1999}) with respect to the above $\PB{\,\cdot\,}{\,\cdot\,}_{\mathfrak{s}^{*}}$ and the $(-)$-Lie--Poisson bracket $\PB{\,\cdot\,}{\,\cdot\,}^{-}_{\mathfrak{jac}(\R^{2d})^{*}}$ on $\mathfrak{jac}(\R^{2d})^{*} = \{ (\Pi, \lambda, \alpha) \}$ defined as follows: For any $f, g \in C^{\infty}(\mathfrak{jac}(\R^{2d})^{*})$,
\begin{equation}
  \label{eq:PB-jac}
  \PB{f}{g}_{\mathfrak{jac}(\R^{2d})^{*}}^{-}(\Pi, \lambda, \alpha) \defeq 
  \alpha \PB{f}{g}_{\R^{2d}}
  - \lambda \cdot \parentheses{ \Fd{f}{\Pi} \pd{g}{\lambda} - \Fd{g}{\Pi} \pd{f}{\lambda} }
  - \tr\parentheses{ \Pi^{T} \brackets{ \Fd{f}{\Pi}, \Fd{g}{\Pi} }_{\mathfrak{sp}} },
\end{equation}
where $[\,\cdot\,,\,\cdot\,]_{\mathfrak{sp}}$ is the standard commutator on $\mathfrak{sp}(\R^{2d})$; we also identified $(\Bbb{R}^{2d})^*\simeq\Bbb{R}^{2d}$ via the usual dot product.
The differential ${\delta f}/{\delta\Pi} \in \sp(\R^{2d})$ is defined in terms of the natural dual pairing $\ip{\,\cdot\,}{\,\cdot\,}_{\mathfrak{sp}}\colon \sp(\R^{2d})^{*} \times \sp(\R^{2d}) \to \R$ as follows:
For any $\Delta\Pi \in \sp(\R^{2d})^{*}$
\begin{equation*}
  \left.\od{}{\varepsilon} f(\Pi+\varepsilon\Delta\Pi) \right|_{\varepsilon=0}
  = \ip{\Delta\Pi}{\Fd{f}{\Pi}}_{\mathfrak{sp}} = \tr\parentheses{\Delta\Pi^{T} \Fd{f}{\Pi} },
\end{equation*}
where we identified $\sp(\R^{2d})^{*}$ with $\sp(\R^{2d})$ via the inner product on $\mathfrak{sp}(\R^{2d})$ defined in \eqref{eq:ip-sp} below.
The other differentials denoted with $\delta$ are defined similarly.

Notice however that the image of the momentum map $\mathbf{J}_{\mathfrak{s}^{*}}$ in \eqref{eq:J_sstar} is not quite identified with those moments of interest from \eqref{eq:moments}; in other words, $\mathfrak{jac}(\R^{2d})^{*} = \mathfrak{sp}(\Bbb{R}^{2d})^{*} \times \mathfrak{h}(\Bbb{R}^{2d})^{*}$ is not a natural space in which those moments live.
However, by exploiting the ``untangling map'' of \citet[Proposition 2.2]{KrMa1987} and the identification of $\sp(\R^{2d})^{*}$ with $\sym(2d,\R)$ outlined in Section~\ref{ssec:sp-sym}, the Lie--Poisson bracket~\eqref{eq:PB-jac} gives rise to the Poisson bracket
\begin{equation}
  \label{eq:PB-GaussianWigner}
  \PB{f}{g}(z,\Sigma) =
  \PB{f}{g}_{\R^{2d}}
  - \frac{4}{\hbar}\tr\parentheses{ \Sigma\! \left[ \Fd{f}{\Sigma}, \Fd{g}{\Sigma} \right]_{\sym} }
\end{equation}
on $\R^{2d} \times \sym(2d,\R) = \{(z, \Sigma)\}$; this space is naturally identified as the space of the Gaussian moments $(z,\Sigma)$.
See Appendix~\ref{sec:PB-moments} for the details of the derivation of the above Poisson bracket.
With a Hamiltonian $h\colon \R^{2d} \times \sym(2d,\R) \to \R$, we have
\begin{equation}
  \dot{z} = \PB{z}{h}_{\R^{2d}}
  \qquad
  \dot{\Sigma} = \frac{4}{\hbar} \left( \mathbb{J}\Fd{h}{\Sigma}\Sigma - \Sigma\Fd{h}{\Sigma}\mathbb{J} \right),
  \label{varianceformulation}
\end{equation}
which are equivalent to (5.2) in \citet{BoTr2016} (up to a sign misprint therein).

If the classical Hamiltonian is quadratic, then \eqref{varianceformulation} with $\Sigma = \sigma(\mathcal{C})$ describes the time evolution of the Gaussian state Wigner function~\eqref{eq:Wigner-psi_0} corresponding to the dynamics of the Gaussian wave packet~\eqref{eq:psi_0} as an exact solution to the corresponding Schr\"odinger equation.
See Section~\ref{sec:DynamicsOfSCSWigner} for the dynamics with non-quadratic Hamiltonians.

For Hamiltonians $h(z, \Sigma)$ that are linear in $\Sigma$, these equations recover the dynamics (12) and (13) in \cite{GrSc2011} (suitably specialized to Hermitian quantum mechanics).
However, certain approximate models in chemical physics \cite{PrPe2000} make use of  nonlinear terms in $\Sigma$, as they are obtained by Gaussian moment closures of the type $\langle\zeta_i\zeta_j\zeta_k\rangle=\langle\zeta_i\rangle\langle\zeta_j\zeta_k\rangle+\langle\zeta_i\rangle\langle\zeta_j\rangle\langle\zeta_k\rangle$.
One may perform such closures in the expression of the total energy to obtain the Hamiltonian of the form $h(z,\Sigma)$, and then can formulate, along with the Poisson bracket~\eqref{eq:PB-GaussianWigner}, the dynamics of $(z, \Sigma)$ as a Hamiltonian system.

\section{Covariance Matrix as a Momentum Map}
Our goal is to establish a link between the symplectic structure~\eqref{eq:Omega} or the Poisson bracket~\eqref{eq:PB-GWP} on $\R^{2d} \times \mathbb{H}_{d}$ and the Poisson bracket~\eqref{eq:PB-GaussianWigner} on $\R^{2d} \times \sym(2d,\R)$.
In this section, we focus on the correspondence between the second parts---$\mathbb{H}_{d}$ and $\sym(2d,\R)$---of these constituents.
The main result, Theorem~\ref{thm:Kostant-Siegel} below, states that this link is made via the momentum map corresponding to the natural action of the symplectic group $\Sp(\R^{2d})$ on the Siegel upper half space $\mathbb{H}_{d}$.
We start off with a brief review of the geometry of $\mathbb{H}_{d}$ in Section~\ref{ssec:Siegel}, and then after giving a brief account of the identification between $\sp(\R^{2d})^{*}$ and $\sym(2d,\R)$ alluded above in Section~\ref{ssec:sp-sym}, we state and prove the main result that the momentum map yields the covariance matrix~\eqref{eq:sigma} in Section~\ref{ssec:MomentumMap}.

\subsection{Geometry of the Siegel Upper Half Space}
\label{ssec:Siegel}
It is well known that the Siegel upper half space $\mathbb{H}_{d}$ defined in \eqref{eq:H_d} is a homogeneous space; more specifically, we can show that
\begin{equation*}
  \mathbb{H}_{d} \cong \Sp(\R^{2d})/\U(d),
\end{equation*}
where $\U(d)$ is the unitary group of degree $d$; see \citet{Si1943} and also \citet[Section~4.5]{Fo1989} and \citet[Exercise~2.28 on p.~48]{McSa1999}.
To see this, let us first rewrite the definition of $\Sp(\R^{2d})$ using block matrices consisting of $d \times d$ submatrices, i.e.,
\begin{equation}
  \label{def:Sp2d-block}
  \Sp(\R^{2d}) \defeq
  \setdef{
    \begin{bmatrix}
      S_{11} & S_{12} \\
      S_{21} & S_{22}
    \end{bmatrix}
    \in \Mat(2d,\R)
  }{
    \begin{array}{c}
      S_{11}^{T}S_{21} = S_{21}^{T}S_{11},\, S_{12}^{T}S_{22} = S_{22}^{T}S_{12}, \smallskip\\
      S_{11}^{T}S_{22} - S_{21}^{T}S_{12} = I_{d}
    \end{array}
  },
\end{equation}
and define the (left) action $\Phi$ of $\Sp(\R^{2d})$ on $\mathbb{H}_{d}$ by the generalized linear fractional transformation
\begin{equation}
  \label{eq:Sp_action}
  \Phi_{(\,\cdot\,)}\colon \Sp(\R^{2d}) \times \mathbb{H}_{d} \to \mathbb{H}_{d};
  \quad
  \parentheses{
    \begin{bmatrix}
      S_{11} & S_{12} \\
      S_{21} & S_{22}
    \end{bmatrix},
    \mathcal{C} = \mathcal{A} + \rmi\mathcal{B}
  }
  \mapsto
  (S_{21} + S_{22}\mathcal{C})(S_{11} + S_{12}\mathcal{C})^{-1}.
\end{equation}
This action is transitive: By choosing
\begin{equation}
  \label{eq:S-mathcalAB}
  S =
  \Xi(\mathcal{A},\mathcal{B})
  \defeq
  \begin{bmatrix}
    I_{d} & 0 \\
    \mathcal{A} & I_{d}
  \end{bmatrix}
  \begin{bmatrix}
    \mathcal{B}^{-1/2} & 0 \\
    0 & \mathcal{B}^{1/2}
  \end{bmatrix}
  =
  \begin{bmatrix}
    \mathcal{B}^{-1/2} & 0 \\
    \mathcal{A}\mathcal{B}^{-1/2} & \mathcal{B}^{1/2}
  \end{bmatrix},
\end{equation}
which is easily shown to be symplectic, we have
\begin{equation}
  \label{eq:S-mathcalAB2}
  \Phi_{\Xi(\mathcal{A},\mathcal{B})}({\rm i}I_{d}) = \mathcal{A} + {\rm i}\mathcal{B}.
\end{equation}
The isotropy subgroup of the element ${\rm i}I_{d} \in \mathbb{H}_{d}$ is given by
\begin{align*}
  \Sp(\R^{2d})_{{\rm i}I_{d}} &= \setdef{
    \begin{bmatrix}
      U  & V \\
      -V & U
    \end{bmatrix} \in \Mat(2d,\R)
  }{U^{T}U + V^{T}V = I_{d},\, U^{T}V = V^{T}U}
  \nonumber\\
  &= \Sp(\R^{2d}) \cap \Orth(2d),
\end{align*}
where $\Orth(2d)$ is the orthogonal group of degree $2d$; however $\Sp(\R^{2d}) \cap \Orth(2d)$ is identified with $\U(d)$ as follows:
\begin{equation*}
  \Sp(\R^{2d}) \cap \Orth(2d) \to \U(d);
  \quad
  \begin{bmatrix}
    U  & V \\
    -V & U
  \end{bmatrix}
  \mapsto U + {\rm i}V.
\end{equation*}
Hence $\Sp(\R^{2d})_{{\rm i}I_{d}} \cong \U(d)$ and thus $\mathbb{H}_{d} \cong \Sp(\R^{2d})/\U(d)$.
We may then construct the corresponding quotient map as follows:
\begin{equation*}
  \pi_{\U(d)}\colon \Sp(\R^{2d}) \to \Sp(\R^{2d})/\U(d) \cong \mathbb{H}_{d};
  \quad
  Y \mapsto \Phi_{Y}({\rm i}I_{d}),
\end{equation*}
or more explicitly,
\begin{equation*}
  \pi_{\U(d)}\parentheses{
    \begin{bmatrix}
      S_{11} & S_{12} \\
      S_{21} & S_{22}
    \end{bmatrix}
  }
  = (S_{21} + {\rm i}S_{22})(S_{11} + {\rm i}S_{12})^{-1},
\end{equation*}
where $S_{11} + {\rm i}S_{12}$ can be shown to be invertible if $
\begin{tbmatrix}
  S_{11} & S_{12} \\
  S_{21} & S_{22}
\end{tbmatrix} \in \Sp(\R^{2d})$.
Let $L_{S}\colon \Sp(\R^{2d}) \to \Sp(\R^{2d})$ be the left multiplication by $S \in \Sp(\R^{2d})$, i.e., $L_{S}(Y) = S Y$ for any $Y \in \Sp(\R^{2d})$.
Then it is easy to see that
\begin{equation}
  \label{eq:Phi-pi}
  \Phi_{S} \circ \pi_{\U(d)} = \pi_{\U(d)} \circ L_{S}
\end{equation}
or the diagram below commutes, i.e., $\Phi$ defined in \eqref{eq:Sp_action} is in fact a left action.
\begin{equation*}
  \begin{tikzcd}[row sep=7ex, column sep=7ex]
    \Sp(\R^{2d}) \arrow{r}{L_{S}} \arrow{d}[swap]{\pi_{\U(d)}} & \Sp(\R^{2d}) \arrow{d}{\pi_{\U(d)}} \\
    \mathbb{H}_{d} \arrow{r}[swap]{\Phi_{S}} &\mathbb{H}_{d}
  \end{tikzcd}
\end{equation*}

The Siegel upper half space $\mathbb{H}_{d}$ is also a symplectic manifold with symplectic form~(see \citet{Si1943} and also \citet{Oh2015c})
\begin{equation}
  \label{eq:Omega-H_d}
  \Omega_{\mathbb{H}_{d}}
  \defeq \mathcal{B}^{-1}_{lj} \mathcal{B}^{-1}_{km} \d\mathcal{B}_{lm} \wedge \d\mathcal{A}_{jk}
  = -\d\mathcal{B}^{-1}_{jk} \wedge \d\mathcal{A}_{jk}.
\end{equation}
In fact, one may define the canonical one-form $\Theta_{\mathbb{H}_{d}}$ on $\mathbb{H}_{d}$ as
\begin{equation}
  \label{eq:Theta-H_d}
  \Theta_{\mathbb{H}_{d}} \defeq -\tr(\mathcal{A}\, \d\mathcal{B}^{-1})
\end{equation}
so that $\Omega_{\mathbb{H}_{d}} = -\d\Theta_{\mathbb{H}_{d}}$, and the corresponding Poisson bracket is $\PB{\,\cdot\,}{\,\cdot\,}_{\mathbb{H}_{d}}$ shown in \eqref{eq:PB-H_d}.

\subsection{Symplectic Algebra and Lie Algebra of Symmetric Matrices}
\label{ssec:sp-sym}
The following bracket renders the space $\sym(2d,\R)$ of $2d \times 2d$ symmetric real matrices a Lie algebra:
\begin{equation}
  \label{eq:Lie_bracket-Sym}
  [\,\cdot\,, \,\cdot\,]_{\sym}\colon \sym(2d,\R) \times \sym(2d,\R) \to \sym(2d,\R);
  \qquad
  [\xi, \eta]_{\sym} \defeq \xi \mathbb{J}^{T} \eta - \eta \mathbb{J}^{T} \xi.
\end{equation}
We then identify the symplectic algebra $\sp(\R^{2d})$ with $\sym(2d,\R)$ via the following ``tilde map'':
\begin{equation}
  \label{eq:tilde_map}
  \tilde{(\,\cdot\,)}\colon \sym(2d,\R) \to \sp(\R^{2d});
  \qquad
  \xi =
  \begin{bmatrix}
    \xi_{11} & \xi_{12} \\
    \xi_{12}^{T} & \xi_{22}
  \end{bmatrix}
  \mapsto
  \mathbb{J}^{T} \xi
  =
  \begin{bmatrix}
    -\xi_{12}^{T} & -\xi_{22} \\
    \xi_{11} & \xi_{12}
  \end{bmatrix}
  \eqdef \tilde{\xi},
\end{equation}
where $\xi_{12} \in \Mat(d,\R)$, i.e., it is a $d \times d$ real matrix, and $\xi_{11}, \xi_{22} \in \sym(d,\R)$.
So writing $\tilde{\xi} = \begin{tbmatrix}
  \tilde{\xi}_{11} & \tilde{\xi}_{12} \\
  \tilde{\xi}_{12}^{T} & \tilde{\xi}_{22}
\end{tbmatrix} \in \sp(\R^{2d})$, the identification~\eqref{eq:tilde_map} is written explicitly in terms of the block components as follows:
\begin{equation}
  \label{eq:xi-tildexi}
  \tilde{\xi}_{11} = -\xi_{12}^{T},
  \qquad
  \tilde{\xi}_{12} = -\xi_{22},
  \qquad
  \tilde{\xi}_{21} = \xi_{11},
  \qquad
  \tilde{\xi}_{22} = \xi_{12}.
\end{equation}
In fact, it is easy to see that this is a Lie algebra isomorphism:
Let $[\tilde{\xi}, \tilde{\eta}]_{\sp} \defeq \tilde{\xi} \tilde{\eta} - \tilde{\eta} \xi$ be the standard Lie bracket of $\sp(\R^{2d})$; then, for any $\xi, \eta \in \sym(2d,\R)$, 
\begin{equation*}
  [\tilde{\xi}, \tilde{\eta}]_{\sp} = \widetilde{ [\xi, \eta] }_{\sym}.
\end{equation*}
One may also define inner products on both spaces as follows:
\begin{equation}
  \label{eq:ip-sp}
  \ip{\,\cdot\,}{\,\cdot\,}_{\sp}\colon \sp(\R^{2d}) \times \sp(\R^{2d}) \to \R;
  \qquad
  (\tilde{\xi}, \tilde{\eta}) \mapsto \tr(\tilde{\xi}^{T} \tilde{\eta})
\end{equation}
and 
\begin{equation*}
  \ip{\,\cdot\,}{\,\cdot\,}\colon \sym(2d,\R) \times \sym(2d,\R) \to \R;
  \qquad
  (\xi, \eta) \mapsto \tr(\xi \eta),
\end{equation*}
and so we may identify their dual spaces with themselves.
As a result, we have
\begin{equation*}
  \sp(\R^{2d})^{*} \cong \sp(\R^{2d}) \cong \sym(2d,\R) \cong \sym(2d,\R)^{*}.
\end{equation*}
The above inner products are compatible with the identification via the tilde map~\eqref{eq:tilde_map} in the sense that $\ip{\xi}{\eta} = \tip{\tilde{\xi}}{\tilde{\eta}}_{\sp}$.
Therefore, in what follows, we exploit the tilde map identification~\eqref{eq:tilde_map} to write elements in $\sp(\R^{2d})$, $\sp(\R^{2d})^{*}$, and $\sym(2d,\R)^{*}$ as the corresponding ones in $\sym(2d,\R)$ to simplify calculations.

Recall that the symplectic group $\Sp(\R^{2d})$ acts on its Lie algebra $\sp(\R^{2d})$ via the adjoint action, i.e., for any $S \in \Sp(\R^{2d})$ and $\tilde{\xi} \in \sp(\R^{2d})$, the adjoint action $\Ad_{S}\colon \sp(\R^{2d}) \to \sp(\R^{2d})$
\begin{equation*}
  \Ad_{S} \tilde{\xi} = S \tilde{\xi} S^{-1}.
\end{equation*}
With an abuse of notation, one may define the corresponding action\footnote{Strictly speaking, this is not an adjoint action because $\xi$ is not in $\sp(\R^{2d})$, but is identified with the above adjoint action of $\Sp(\R^{2d})$. However it is natural in the sense that $\widetilde{\Ad_{S} \xi} = \Ad_{S} \tilde{\xi}$.} $\Ad_{S}\colon \sym(2d,\R) \to \sym(2d,\R)$ of $\Sp(\R^{2d})$ by
\begin{equation*}
  \Ad_{S} \xi = (S^{-1})^{T} \xi S^{-1}.
\end{equation*}
Hence the corresponding action on the dual $\sym(2d,\R)^{*} \cong \sym(2d,\R)$ is given by
\begin{equation}
  \label{eq:Adstar_on_Sym}
  \Ad^{*}_{S^{-1}} \mu = S \mu S^{T}.
\end{equation}
One then sees easily that, for any $\xi \in \sym(2d,\R)$, the corresponding $\ad_{\xi}\colon \sym(2d,\R) \to \sym(2d,\R)$ is compatible with the Lie bracket~\eqref{eq:Lie_bracket-Sym}, i.e.,
\begin{equation*}
  \ad_{\xi}\eta = \xi \mathbb{J}^{T} \eta - \eta \mathbb{J}^{T} \xi = [\xi, \eta]_{\sym},
\end{equation*}
and then its adjoint $\ad^{*}_{\xi}\colon \sym(2d,\R)^{*} \to \sym(2d,\R)^{*}$ is given by
\begin{equation}
  \label{eq:ad_star-Sym}
  \ad^{*}_{\xi}\mu = \mathbb{J} \xi \mu - \mu \xi \mathbb{J}.
\end{equation}

The coadjoint action~\eqref{eq:Adstar_on_Sym} defines the coadjoint orbit
\begin{equation*}
  \mathcal{O} \defeq \setdef{ \Ad^{*}_{S} \mu \in \sym(2d,\R)^{*} }{ S \in \Sp(\R^{2d}) }
\end{equation*}
for each $\mu \in \sym(2d,\R)^{*}$; it is well known that $\mathcal{O}$ is equipped with the following $(\pm)$-Kirillov--Kostant--Souriau (KKS) symplectic structures:
For any $\mu \in \mathcal{O}$ and any $\xi, \eta \in \sym(2d,\R)$,
\begin{equation}
  \label{eq:KKS-sym}
  \Omega_{\mathcal{O}}^{\pm}(\mu)(\ad^{*}_{\xi}\mu, \ad^{*}_{\eta}\mu) \defeq \pm\ip{ \mu }{ [\xi,\eta]_{\sym} }
  =\pm \tr(\mu [\xi,\eta]_{\sym}).
\end{equation}
The $(\pm)$-Lie--Poisson structure on $\sym(2d,\R)^{*} \cong \sym(2d,\R)$ that is compatible with the above KKS symplectic form is given by
\begin{equation}
  \label{eq:LPB-sym}
  \PB{f}{g}_{\sym(2d,\R)}^{\pm}(\mu)
  \defeq \pm\ip{ \mu }{
    \brackets{
      \frac{\delta f}{\delta \mu}, \frac{\delta g}{\delta \mu}
    }_{\sym}
  }
  = \pm\tr\parentheses{
    \mu \brackets{
      \frac{\delta f}{\delta \mu}, \frac{\delta g}{\delta \mu}
    }_{\sym}
  }.
\end{equation}

\subsection{Momentum Map on the Siegel Upper Half Space}
\label{ssec:MomentumMap}
Recall that the symplectic group $\Sp(\R^{2d})$ acts on the Siegel upper half space $\mathbb{H}_{d}$ transitively by the action $\Phi$ shown in \eqref{eq:Sp_action}.
The main ingredient of the paper is the momentum map
\begin{equation}
  \label{def:J}
  \mathbf{J}\colon \mathbb{H}_{d} \to \sp(\R^{2d})^{*} \cong \sym(2d,\R)
\end{equation}
corresponding to this action:
Let $\xi \in \sym(2d,\R) \cong \sp(\R^{2d})$ and $\xi_{\mathbb{H}_{d}}$ be its infinitesimal generator (recall that we identify $\xi \in \sym(2d,\R)$ with $\tilde{\xi} \in \sp(\R^{2d})$), i.e.,
\begin{equation*}
  \xi_{\mathbb{H}_{d}}(\mathcal{C}) \defeq \left.\od{}{\varepsilon}\Phi_{\exp(\varepsilon\tilde{\xi})}(\mathcal{C}) \right|_{\varepsilon=0}.
\end{equation*}
Then $\mathbf{J}$ is characterized by
\begin{equation}
  \label{eq:J-HamSys}
  \ins{\xi_{\mathbb{H}_{d}}}\Omega_{\mathbb{H}_{d}} = \d\ip{\mathbf{J}(\,\cdot\,)}{\xi}
\end{equation}
for any $\xi \in \sym(2d,\R)$.
\begin{remark}
  We note in passing that the canonical one-form $\Theta_{\mathbb{H}_{d}}$ defined in \eqref{eq:Theta-H_d} is {\em not} invariant under the action $\Phi$ and thus the simplified formula $\ip{ \mathbf{J}(\,\cdot\,) }{ \xi } = \Theta_{\mathbb{H}_{d}}\parentheses{ \xi_{\mathbb{H}_{d}}(\,\cdot\,) }$ (see, e.g., \citet[Theorem~4.2.10 on p.~282]{AbMa1978}) for the momentum map $\mathbf{J}$ is {\em not} valid here.
  Hence we will use the formula \eqref{eq:J-HamSys} to find the momentum map $\mathbf{J}$.
\end{remark}

Now our main result is the following:
\begin{theorem}
  \label{thm:Kostant-Siegel}
  Let $\mathbf{J}\colon \mathbb{H}_{d} \to \sp(\R^{2d})^{*} \cong \sym(2d,\R)$ be the momentum map~\eqref{def:J} corresponding to the $\Sp(\R^{2d})$ action $\Phi$ (see \eqref{eq:Sp_action}) on the Siegel upper half space $\mathbb{H}_{d}$.
  \begin{enumerate}[(i)]
  \item The image $\mathbf{J}(\mathcal{C})$ of $\mathcal{C} = \mathcal{A} + \rmi\mathcal{B} \in \mathbb{H}_{d}$ is the covariance matrix $\sigma(\mathcal{C})$ in the Gaussian state Wigner function~\eqref{eq:Wigner-psi_0}, i.e., 
    \begin{equation}
      \label{eq:J}
      \mathbf{J}(\mathcal{C}) \defeq 
      \begin{bmatrix}
        \mathcal{B}^{-1} & \mathcal{B}^{-1} \mathcal{A} \\
        \mathcal{A} \mathcal{B}^{-1} & \mathcal{A} \mathcal{B}^{-1} \mathcal{A} + \mathcal{B}
      \end{bmatrix} = \sigma(\mathcal{C}).
    \end{equation}
  \item $\mathbf{J}$ is an equivariant momentum map, i.e., for any $S \in \Sp(\R^{2d})$,
    \begin{equation}
      \label{eq:equivariance}
      \mathbf{J} \circ \Phi_{S} = \Ad^{*}_{S^{-1}} \mathbf{J}.
    \end{equation}
  \item $\mathbf{J}$ is a Poisson map with respect to the Poisson bracket~\eqref{eq:PB-H_d} and the $(+)$-Lie--Poisson bracket~\eqref{eq:LPB-sym}, i.e.,
    \begin{equation}
      \label{eq:J-Poisson}
      \PB{F}{G}_{\sym(2d,\R)}^{+} \circ \mathbf{J} = \PB{ F \circ \mathbf{J} }{ G \circ \mathbf{J} }_{\mathbb{H}_{d}}.
    \end{equation}
  \item The pull-back by $\mathbf{J}$ of the KKS symplectic form $\Omega_{\mathcal{O}}^{+}$ (see \eqref{eq:KKS-sym}) on a coadjoint orbit $\mathcal{O} \subset \sp(\R^{2d})^{*} \cong \sym(2d,\R)$ is the symplectic form $\Omega_{\mathbb{H}_{d}}$ (see \eqref{eq:Omega-H_d}) on the Siegel upper half space, i.e., $\mathbf{J}^{*}\Omega_{\mathcal{O}}^{+} = \Omega_{\mathbb{H}_{d}}$.
  \end{enumerate}
\end{theorem}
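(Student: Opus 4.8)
The plan is to prove part~(i) by an explicit computation and then to deduce parts~(ii)--(iv) from it. For~(i), I would first compute the infinitesimal generator of the action~\eqref{eq:Sp_action}: inserting $\exp(\varepsilon\tilde{\xi}) = I_{2d}+\varepsilon\tilde{\xi}+O(\varepsilon^{2})$ into the generalized linear fractional transformation and differentiating at $\varepsilon=0$ gives
\begin{equation*}
  \xi_{\mathbb{H}_{d}}(\mathcal{C}) = \tilde{\xi}_{21}+\tilde{\xi}_{22}\mathcal{C}-\mathcal{C}\tilde{\xi}_{11}-\mathcal{C}\tilde{\xi}_{12}\mathcal{C} = \xi_{11}+\xi_{12}\mathcal{C}+\mathcal{C}\xi_{12}^{T}+\mathcal{C}\xi_{22}\mathcal{C},
\end{equation*}
where the second equality is the block identification~\eqref{eq:xi-tildexi}; separating real and imaginary parts produces the components $\delta\mathcal{A}$ and $\delta\mathcal{B}$ of $\xi_{\mathbb{H}_{d}}$ in the coordinates $(\mathcal{A},\mathcal{B})$. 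I would then verify the defining relation~\eqref{eq:J-HamSys} directly with $\mathbf{J}(\mathcal{C})=\sigma(\mathcal{C})$ (the shortcut $\ip{\mathbf{J}(\,\cdot\,)}{\xi}=\Theta_{\mathbb{H}_{d}}(\xi_{\mathbb{H}_{d}}(\,\cdot\,))$ being unavailable, since $\Theta_{\mathbb{H}_{d}}$ is not $\Phi$-invariant, as the remark preceding the theorem notes). Working in the coordinates $(\mathcal{A},\mathcal{B}^{-1})$, in which $\Omega_{\mathbb{H}_{d}}=-\d\mathcal{B}^{-1}_{jk}\wedge\d\mathcal{A}_{jk}$ by \eqref{eq:Omega-H_d}, the equation $\ins{\xi_{\mathbb{H}_{d}}}\Omega_{\mathbb{H}_{d}}=\d\ip{\sigma(\mathcal{C})}{\xi}$ with $\ip{\sigma(\mathcal{C})}{\xi}=\tr(\sigma(\mathcal{C})\xi)$ reduces to the two matrix identities $\delta\mathcal{A}=\partial\tr(\sigma(\mathcal{C})\xi)/\partial\mathcal{B}^{-1}$ and $\delta(\mathcal{B}^{-1})=-\partial\tr(\sigma(\mathcal{C})\xi)/\partial\mathcal{A}$, the right-hand sides being the symmetric gradients of $\tr(\sigma(\mathcal{C})\xi)$ expanded through the block form~\eqref{eq:sigma}; a more invariant alternative, which dodges any ambiguity in differentiating with respect to a symmetric matrix, is to contract $\ins{\xi_{\mathbb{H}_{d}}}\Omega_{\mathbb{H}_{d}}-\d\tr(\sigma(\mathcal{C})\xi)$ with an arbitrary symmetric variation $\delta\mathcal{C}$ and check that it vanishes. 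This block-matrix bookkeeping---handling in particular the term $\mathcal{C}\xi_{22}\mathcal{C}$ and the symmetrization conventions---is the one genuinely computational step, and is where I expect the main difficulty.

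For part~(ii), once $\mathbf{J}=\sigma$ is established, equivariance~\eqref{eq:equivariance} is precisely the relation $\sigma(\Phi_{S}\mathcal{C})=S\,\sigma(\mathcal{C})\,S^{T}$ (recall \eqref{eq:Adstar_on_Sym}). Rather than pushing the linear fractional transformation through the block form of $\sigma$, I would use the factorization $\sigma(\mathcal{C})=\Xi(\mathcal{A},\mathcal{B})\,\Xi(\mathcal{A},\mathcal{B})^{T}$, which follows from \eqref{eq:S-mathcalAB} by one matrix multiplication. Writing $\mathcal{A}'+\rmi\mathcal{B}'\defeq\Phi_{S}(\mathcal{C})$, both $S\,\Xi(\mathcal{A},\mathcal{B})$ and $\Xi(\mathcal{A}',\mathcal{B}')$ send $\rmi I_{d}$ to $\Phi_{S}(\mathcal{C})$ under $\Phi$ by \eqref{eq:S-mathcalAB2}, so by the computation of the isotropy subgroup in Section~\ref{ssec:Siegel} they differ by a right factor $R\in\Sp(\R^{2d})\cap\Orth(2d)$; since $RR^{T}=I_{2d}$,
\begin{equation*}
  S\,\sigma(\mathcal{C})\,S^{T}=S\,\Xi(\mathcal{A},\mathcal{B})\,\Xi(\mathcal{A},\mathcal{B})^{T}S^{T}=\Xi(\mathcal{A}',\mathcal{B}')\,RR^{T}\,\Xi(\mathcal{A}',\mathcal{B}')^{T}=\sigma(\Phi_{S}\mathcal{C}).
\end{equation*}

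Parts~(iii) and~(iv) then follow from~(ii). For~(iv), transitivity of $\Phi$ means the generators $\xi_{\mathbb{H}_{d}}(\mathcal{C})$ span $T_{\mathcal{C}}\mathbb{H}_{d}$, so it suffices to evaluate the two $2$-forms on such pairs. Differentiating~\eqref{eq:equivariance} gives $T\mathbf{J}\bigl(\xi_{\mathbb{H}_{d}}(\mathcal{C})\bigr)=\tilde{\xi}\,\mathbf{J}(\mathcal{C})+\mathbf{J}(\mathcal{C})\,\tilde{\xi}^{T}=\ad^{*}_{-\xi}\mathbf{J}(\mathcal{C})$, the last equality using $\tilde{\xi}=\mathbb{J}^{T}\xi$ and \eqref{eq:ad_star-Sym}. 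Hence, by the definition~\eqref{eq:KKS-sym} of the $(+)$-KKS form,
\begin{equation*}
  (\mathbf{J}^{*}\Omega_{\mathcal{O}}^{+})(\xi_{\mathbb{H}_{d}},\eta_{\mathbb{H}_{d}}) = \Omega_{\mathcal{O}}^{+}\bigl(\mathbf{J}(\mathcal{C})\bigr)\bigl(\ad^{*}_{-\xi}\mathbf{J}(\mathcal{C}),\ad^{*}_{-\eta}\mathbf{J}(\mathcal{C})\bigr) = \tr\bigl(\mathbf{J}(\mathcal{C})\,[\xi,\eta]_{\sym}\bigr),
\end{equation*}
while \eqref{eq:J-HamSys} together with the same infinitesimal equivariance gives $\Omega_{\mathbb{H}_{d}}(\xi_{\mathbb{H}_{d}},\eta_{\mathbb{H}_{d}})=(\ins{\xi_{\mathbb{H}_{d}}}\Omega_{\mathbb{H}_{d}})(\eta_{\mathbb{H}_{d}})=\tr\bigl(T\mathbf{J}(\eta_{\mathbb{H}_{d}})\,\xi\bigr)=\tr\bigl(\mathbf{J}(\mathcal{C})\,[\xi,\eta]_{\sym}\bigr)$ as well, so $\mathbf{J}^{*}\Omega_{\mathcal{O}}^{+}=\Omega_{\mathbb{H}_{d}}$. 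Part~(iii) is then immediate: $\mathbf{J}$ is a symplectic map from $(\mathbb{H}_{d},\Omega_{\mathbb{H}_{d}})$ onto the coadjoint orbit $(\mathcal{O},\Omega_{\mathcal{O}}^{+})$, and $\Omega_{\mathcal{O}}^{+}$ is the symplectic form compatible with the $(+)$-Lie--Poisson bracket~\eqref{eq:LPB-sym}, whence~\eqref{eq:J-Poisson}; equivalently this is the general fact that an equivariant momentum map for a left action is a Poisson map (see, e.g., \citet[Theorem~12.4.1]{MaRa1999}). Conceptually, (ii)--(iv) exhibit $\mathbf{J}$ as the symplectic covering of a coadjoint orbit predicted by Kostant's theorem; in fact $\mathbf{J}$ is a symplectomorphism onto the orbit through $\sigma(\rmi I_{d})=I_{2d}$, whose stabilizer $\Sp(\R^{2d})\cap\Orth(2d)\cong\U(d)$ equals the isotropy group of $\rmi I_{d}$ computed in Section~\ref{ssec:Siegel}.
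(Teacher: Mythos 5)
Your proposal is correct and, for parts (i)--(iii), follows essentially the same route as the paper: the same infinitesimal generator $\xi_{\mathbb{H}_{d}}(\mathcal{C})=\xi_{11}+\xi_{12}\mathcal{C}+\mathcal{C}\xi_{12}^{T}+\mathcal{C}\xi_{22}\mathcal{C}$ and verification of \eqref{eq:J-HamSys} for (i); the factorization $\sigma(\mathcal{C})=\Xi(\mathcal{A},\mathcal{B})\,\Xi(\mathcal{A},\mathcal{B})^{T}$ together with the right $\U(d)$-ambiguity for (ii) (the paper packages the same argument as the commuting square $\mathbf{J}\circ\pi_{\U(d)}=\hat{\mathbf{J}}$ with $\hat{\mathbf{J}}(Y)=YY^{T}$ and $\hat{\mathbf{J}}\circ L_{S}=\Ad^{*}_{S^{-1}}\circ\hat{\mathbf{J}}$); and the appeal to the general fact that equivariant momentum maps are Poisson for (iii). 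The only genuine divergence is (iv): the paper simply invokes Kostant's coadjoint orbit covering theorem, using that $\Phi$ is a transitive left action, whereas you unwind that theorem into a direct check---differentiating the equivariance to obtain $T\mathbf{J}\bigl(\xi_{\mathbb{H}_{d}}\bigr)=-\ad^{*}_{\xi}\mathbf{J}$ and evaluating both two-forms on pairs of generators, which span each tangent space by transitivity. Your computation is consistent with the paper's conventions (both forms evaluate to $\tr\bigl(\mathbf{J}(\mathcal{C})[\xi,\eta]_{\sym}\bigr)$), and it buys a self-contained argument at the cost of reproving a standard result; the paper's citation is shorter but leaves the mechanism implicit. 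The one step you defer rather than execute is the block-matrix verification in (i) that $\ins{\xi_{\mathbb{H}_{d}}}\Omega_{\mathbb{H}_{d}}=\d\tr\parentheses{\sigma(\mathcal{C})\xi}$; this goes through exactly as you set it up (the paper carries it out by collecting the $\d\mathcal{A}$, $\d\mathcal{B}$, and $\d\mathcal{B}^{-1}$ terms into an exact differential), so it is bookkeeping rather than a gap.
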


\begin{proof}
  Let us first find an expression for the momentum map $\mathbf{J}$.
  Set $\xi = \begin{tbmatrix}
    \xi_{11} & \xi_{12} \\
    \xi_{12}^{T} & \xi_{22}
  \end{tbmatrix} \in \sym_{d}(\R)$; then, using the expression~\eqref{eq:Sp_action} for $\Phi$ and writing $\mathcal{C} = \mathcal{A} + \rmi\mathcal{B}$, we have
  \begin{align*}
    \xi_{\mathbb{H}_{d}}(\mathcal{C})
    &= \parentheses{
      \tilde{\xi}_{21} + \tilde{\xi}_{22} \mathcal{A} - \mathcal{A} (\tilde{\xi}_{11} + \tilde{\xi}_{12} \mathcal{A}) + \mathcal{B} \tilde{\xi}_{12} \mathcal{B}
      }_{jk}\, \pd{}{\mathcal{A}_{jk}} \\
    &\quad - \parentheses{
      \mathcal{B} \tilde{\xi}_{11} + \tilde{\xi}_{11}^{T} \mathcal{B} + \mathcal{A} \tilde{\xi}_{12} \mathcal{B} + \mathcal{B} \tilde{\xi}_{12} \mathcal{A}
      }_{jk}\, \pd{}{\mathcal{B}_{jk}} \\
    &= \parentheses{
      \xi_{11} + \xi_{12} \mathcal{A} + \mathcal{A} \xi_{12}^{T} + \mathcal{A} \xi_{22} \mathcal{A} - \mathcal{B} \xi_{22} \mathcal{B}
      }_{jk}\, \pd{}{\mathcal{A}_{jk}} \\
    &\quad + \parentheses{
      \mathcal{B} \xi_{12}^{T} + \xi_{12} \mathcal{B} + \mathcal{B} \xi_{22} \mathcal{A} + \mathcal{A} \xi_{22} \mathcal{B}
      }_{jk}\, \pd{}{\mathcal{B}_{jk}}
  \end{align*}
  where we used the identities~\eqref{eq:xi-tildexi}.
  Then, using the definition~\eqref{eq:Omega-H_d} of the symplectic form $\Omega_{\mathbb{H}_{d}}$, we have
  \begin{align*}
    \ins{\xi_{\mathbb{H}_{d}}}\Omega_{\mathbb{H}_{d}}
    &= \brackets{ \mathcal{B}^{-1} (\mathcal{B} \xi_{12}^{T} + \xi_{12} \mathcal{B} + \mathcal{B} \xi_{22} \mathcal{A} + \mathcal{A} \xi_{22} \mathcal{B}) \mathcal{B}^{-1} }_{jk} \d\mathcal{A}_{jk} \\
    &\quad - \brackets{ 
      \mathcal{B}^{-1} (\xi_{11} + \xi_{12} \mathcal{A} + \mathcal{A} \xi_{12}^{T} + \mathcal{A} \xi_{22} \mathcal{A} - \mathcal{B} \xi_{22} \mathcal{B}) \mathcal{B}^{-1}
      }_{lm} \d\mathcal{B}_{lm} \\
    &= \tr\brackets{
      2( \xi_{12}^{T} + \xi_{22} \mathcal{A} ) \mathcal{B}^{-1} \d\mathcal{A}
      + (\xi_{11} + 2 \xi_{12} \mathcal{A} + \mathcal{A} \xi_{22} \mathcal{A}) \d\mathcal{B}^{-1} + \xi_{22} \d\mathcal{B}
      } \\
    &= \d\tr\brackets{
      \mathcal{B}^{-1} \xi_{11} + 2 \mathcal{A} \mathcal{B}^{-1} \xi_{12}
      + ( \mathcal{A} \mathcal{B}^{-1} \mathcal{A} + \mathcal{B}) \xi_{22}
      }.
  \end{align*}
  But then one notices that
  \begin{align*}
    \tr\brackets{
      \mathcal{B}^{-1} \xi_{11} + 2 \mathcal{A} \mathcal{B}^{-1} \xi_{12}
      + ( \mathcal{A} \mathcal{B}^{-1} \mathcal{A} + \mathcal{B} ) \xi_{22}
    }
    &= \tr\parentheses{
      \begin{bmatrix}
        \mathcal{B}^{-1} & \mathcal{B}^{-1} \mathcal{A} \\
        \mathcal{A} \mathcal{B}^{-1} & \mathcal{A} \mathcal{B}^{-1} \mathcal{A} + \mathcal{B}
      \end{bmatrix}
      \begin{bmatrix}
        \xi_{11} & \xi_{12} \\
        \xi_{12}^{T} & \xi_{22}
      \end{bmatrix}
     } \\
    &= \tr\parentheses{ \mathbf{J}(\mathcal{C}) \xi } \\
    &= \ip{ \mathbf{J}(\mathcal{C}) }{ \xi }
  \end{align*}
  with $\mathbf{J}(\mathcal{C})$ defined as in \eqref{eq:J}.
  Hence we have $\ins{\xi_{\mathbb{H}_{d}}}\Omega_{\mathbb{H}_{d}} = \d\ip{\mathbf{J}(\,\cdot\,)}{\xi}$ and thus \eqref{eq:J} gives the momentum map $\mathbf{J}$.
  We see that $\mathbf{J}(\mathcal{C})$ is nothing but the covariance matrix $\sigma(\mathcal{C})$ in \eqref{eq:sigma}.

  Let us next show the equivariance~\eqref{eq:equivariance}.
  First define, as in \eqref{eq:S-mathcalAB},
  \begin{equation*}
    \Xi\colon \mathbb{H}_{d} \to \Sp(\R^{2d});
    \qquad
    \mathcal{C} = \mathcal{A} + \rmi\mathcal{B} \mapsto
    \begin{bmatrix}
      \mathcal{B}^{-1/2} & 0 \\
      \mathcal{A}\mathcal{B}^{-1/2} & \mathcal{B}^{1/2}
    \end{bmatrix}.
  \end{equation*}
  As we have seen in \eqref{eq:S-mathcalAB2}, $\pi_{\U(d)} \circ \Xi(\mathcal{C}) = \mathcal{C}$ for any $\mathcal{C} \in \mathbb{H}_{d}$.
  Let us also define
  \begin{equation}
    \label{eq:Jhat}
    \hat{\mathbf{J}}\colon \Sp(\R^{2d}) \to \sym(2d,\R);
    \qquad
    Y \mapsto Y Y^{T}.
  \end{equation}
  Then $\mathbf{J} \circ \pi_{\U(d)} = \hat{\mathbf{J}}$, i.e., the diagram below commutes.
  \begin{equation*}
    \begin{tikzcd}[column sep=8ex, row sep=7ex]
      \Sp(\R^{2d}) \arrow{dr}{\hat{\mathbf{J}}} \arrow{d}[swap]{\pi_{\U(d)}} & 
      \\
      \mathbb{H}_{d} \arrow{r}[swap]{\mathbf{J}} & \sym(2d,\R)
    \end{tikzcd}
  \end{equation*}
  In fact, let $Y \in \Sp(\R^{2d})$ be arbitrary and set $\mathcal{C} = \mathcal{A} + \rmi\mathcal{B} = \pi_{\U(d)}(Y) \in \mathbb{H}_{d}$.
  Then $Y = \Xi(\mathcal{C})\, \mathcal{U}$ with some $\mathcal{U} \in \Sp(\R^{2d}) \cap \Orth(2d) \cong \U(d)$ and hence a simple calculation shows that
  \begin{equation*}
    \hat{\mathbf{J}}(Y) = \Xi(\mathcal{C})\, \Xi(\mathcal{C})^{T} = \mathbf{J}(\mathcal{C}) = \mathbf{J} \circ \pi_{\U(d)}(Y).
  \end{equation*}
  Now we see from the definition~\eqref{eq:Jhat} of $\hat{\mathbf{J}}$ that, for any $S, Y \in \Sp(\R^{2d})$, 
  \begin{equation*}
    \hat{\mathbf{J}} \circ L_{S}(Y) = S Y Y^{T} S^{T} = \Ad^{*}_{S^{-1}}\hat{\mathbf{J}}(Y),
  \end{equation*}
  where $\Ad^{*}$ on the right-hand side is defined in \eqref{eq:Adstar_on_Sym}.
  Let $\mathcal{C}$ be an arbitrary element in $\mathbb{H}_{d}$; then there exists $Y \in \Sp(\R^{2d})$ such that $\pi_{\U(d)}(Y) = \mathcal{C}$ since $\mathbb{H}_{d}$ is identified with the homogeneous space $\Sp(\R^{2d})/\U(d)$ as explained in Section~\ref{ssec:Siegel}.
  Consequently, upon using \eqref{eq:Phi-pi} and the above identities, we see that
  \begin{align*}
    \mathbf{J} \circ \Phi_{S}(\mathcal{C})
    &= \mathbf{J} \circ \Phi_{S} \circ \pi_{\U(d)}(Y) \\
    &= \mathbf{J} \circ \pi_{\U(d)} \circ L_{S}(Y) \\
    &= \hat{\mathbf{J}} \circ L_{S}(Y) \\
    &= \Ad^{*}_{S^{-1}} \hat{\mathbf{J}}(Y) \\
    &= \Ad^{*}_{S^{-1}} \mathbf{J} \circ \pi_{\U(d)}(Y) \\
    &= \Ad^{*}_{S^{-1}} \mathbf{J}(\mathcal{C}).
  \end{align*}
  Hence the equivariance~\eqref{eq:equivariance} follows.
  The diagram below summarizes the proof.
  \begin{equation*}
    \begin{tikzcd}[row sep=5ex, column sep=6ex]
      \Sp(\R^{2d}) \arrow{rrr}{L_{S}} \arrow{dd}[swap]{\pi_{\U(d)}} \arrow{rd}{\hat{\mathbf{J}}} & & & \Sp(\R^{2d}) \arrow{ld}[swap]{\hat{\mathbf{J}}} \arrow{dd}{\pi_{\U(d)}} \\
      & \sym(2d,\R) \arrow{r}{\Ad^{*}_{S^{-1}}} & \sym(2d,\R) & \\
      \mathbb{H}_{d} \arrow{rrr}[swap]{\Phi_{S}} \arrow{ru}{\mathbf{J}} & & & \mathbb{H}_{d} \arrow{lu}[swap]{\mathbf{J}}
    \end{tikzcd}
  \end{equation*}

  The equivariance of $\mathbf{J}$ implies that $\mathbf{J}$ is Poisson in the sense described in the statement; see, e.g., \citet[Theorem~12.4.1 on p.~403]{MaRa1999}.
 
  That $\mathbf{J}^{*}\Omega_{\mathcal{O}}^{+} = \Omega_{\mathbb{H}_{d}}$ follows from Kostant's coadjoint orbit covering theorem~\cite{Ko1966} (see also \citet[Theorem~14.4.5 on p.~465]{MaRa1999}) because $\Phi$ is a left transitive action as we discussed in Section~\ref{ssec:Siegel}.
\end{proof}

\section{Collective Dynamics of Covariance Matrix}
\label{sec:collective_dynamics}
Theorem~\ref{thm:Kostant-Siegel} suggests that the $\mathbb{H}_{d}$ portion of the Gaussian wave packet dynamics~\eqref{eq:HamiltonianSystem-GWP} on $\mathcal{P} \defeq \R^{2d} \times \mathbb{H}_{d}$ becomes a Lie--Poisson dynamics on $\sym(2d,\R)$ via the momentum map $\mathbf{J}$ defined in \eqref{eq:J}.
In other words, the dynamics of the corresponding covariance matrix $\Sigma = \sigma(\mathcal{C})$ is an example of the so-called {\em collective dynamics} (see, e.g., \citet{GuSt1980,GuSt1990}).

In this section, we assume that the potential $V$ is quadratic for simplicity; a more general case will be discussed in the next section.
Notice that, when $V$ is quadratic, \eqref{eq:OhLe} decouples into \eqref{eq:OhLe-qp}---which becomes a classical system---and \eqref{eq:OhLe-AB}, i.e., the dynamics decouples into those on $\R^{2d}$ and $\mathbb{H}_{d}$.
Hence we focus on the dynamics on $\mathbb{H}_{d}$ and $\sym(2d,\R)$ for now and incorporate the $\R^{2d}$ portion in the next section.
Specifically, we may define the Hamiltonian $H_{\mathbb{H}_{d}}\colon \mathbb{H}_{d} \to \R$ by
\begin{equation*}
  H_{\mathbb{H}_{d}} \defeq -\tr\brackets{ \mathcal{B}^{-1}\parentheses{ \frac{\mathcal{A}^{2} + \mathcal{B}^{2}}{m} + D^{2}V } }.
\end{equation*}
Then the Hamiltonian system $\ins{X_{\mathbb{H}_{d}}} \Omega_{\mathbb{H}_{d}} = \d{H_{\mathbb{H}_{d}}}$ gives \eqref{eq:OhLe-AB}.
As we shall see below, this Hamiltonian $H_{\mathbb{H}_{d}}$ turns out to be a collective Hamiltonian with the momentum map $\mathbf{J}$, and so the corresponding dynamics becomes Lie--Poisson via the momentum map $\mathbf{J}$:
\begin{corollary}
  \label{cor:collective_dynamics}
  Let $H_{\rm cl}\colon \R^{2d} \to \R$ be the classical Hamiltonian
  \begin{equation}
    \label{eq:H_cl}
    H_{\rm cl}(q,p) = \frac{1}{2m}p^{2} + V(q),
  \end{equation}
  and suppose that $V$ is quadratic.
  Also define $h_{\sym}\colon \sym(2d,\R) \cong \sp(\R^{2d})^{*} \to \R$ by
  \begin{equation*}
    h_{\sym}(\Sigma) \defeq -\ip{ \Sigma }{ D^{2}H_{\rm cl} } = -\tr( \Sigma D^{2}H_{\rm cl} ),
  \end{equation*}
  where $\Sigma = \begin{tbmatrix}
    \Sigma_{11} & \Sigma_{12} \\
    \Sigma_{12}^{T} & \Sigma_{22}
  \end{tbmatrix} \in \sym(2d,\R) \cong \sp(\R^{2d})^{*}$.
  Then $H_{\mathbb{H}_{d}}$ can be written as a collective Hamiltonian using $h_{\sym}$ and the momentum map $\mathbf{J}$, i.e.,
  \begin{equation*}
    H_{\mathbb{H}_{d}} = h_{\sym} \circ \mathbf{J}.
  \end{equation*}
  As a result, the vector field $X_{\mathcal{O}}$ defined by $X_{\mathcal{O}} \circ \mathbf{J} = T\mathbf{J} \circ X_{\mathbb{H}_{d}}$ on the coadjoint orbit $\mathcal{O}_{\Sigma} \subset \sym(2d,\R) \cong \sp(\R^{2d})^{*}$ through $\Sigma = \mathbf{J}(\mathcal{C})$ is the Lie--Poisson dynamics defined by the above Hamiltonian $h_{\sym}$, i.e.,
  \begin{equation}
    \label{eq:Lie-Poisson}
    \dot{\Sigma} = -\ad^{*}_{\delta h_{\sym}/\delta \Sigma} \Sigma = \mathbb{J} D^{2}H_{\rm cl}\, \Sigma - \Sigma D^{2}H_{\rm cl}\, \mathbb{J}.
  \end{equation}
\end{corollary}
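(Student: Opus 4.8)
The plan is to establish the corollary in two steps: first that $H_{\mathbb{H}_{d}}$ is a collective Hamiltonian, i.e.\ the pull-back $h_{\sym}\circ\mathbf{J}$ of the linear function $h_{\sym}$ under the momentum map $\mathbf{J}$ of Theorem~\ref{thm:Kostant-Siegel}, and second that the reduced dynamics on the relevant coadjoint orbit is therefore the $(+)$-Lie--Poisson flow generated by $h_{\sym}$, which I then write out explicitly. The hypothesis that $V$ be quadratic enters precisely here: it makes $D^{2}V$ a constant matrix, so that $H_{\mathbb{H}_{d}}$ and $h_{\sym}$ are genuine functions on $\mathbb{H}_{d}$ and on $\sym(2d,\R)$ respectively, and it is also what lets \eqref{eq:OhLe} decouple into \eqref{eq:OhLe-qp} and \eqref{eq:OhLe-AB} so that the $\mathbb{H}_{d}$-dynamics may be treated on its own.

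For the first step I would start from the block form $D^{2}H_{\rm cl} = \diag(D^{2}V,\, m^{-1}I_{d})$ of the Hessian of $H_{\rm cl}(q,p) = p^{2}/(2m) + V(q)$, noting that $D^{2}H_{\rm cl}$ is symmetric and hence a legitimate element of $\sym(2d,\R)\cong\sp(\R^{2d})^{*}$, so that $h_{\sym}$ is well defined. Multiplying $\mathbf{J}(\mathcal{C}) = \sigma(\mathcal{C})$ from \eqref{eq:J} by this block-diagonal matrix and taking the trace, the off-diagonal blocks of $\sigma(\mathcal{C})$ drop out, and cyclicity of the trace together with the symmetry of $\mathcal{A}$ and $\mathcal{B}$ gives $\tr(\mathcal{A}\mathcal{B}^{-1}\mathcal{A}) = \tr(\mathcal{B}^{-1}\mathcal{A}^{2})$ and $\tr\mathcal{B} = \tr(\mathcal{B}^{-1}\mathcal{B}^{2})$; collecting terms yields $-\tr(\sigma(\mathcal{C})\,D^{2}H_{\rm cl}) = -\tr\brackets{\mathcal{B}^{-1}\parentheses{(\mathcal{A}^{2}+\mathcal{B}^{2})/m + D^{2}V}} = H_{\mathbb{H}_{d}}(\mathcal{C})$, i.e.\ $H_{\mathbb{H}_{d}} = h_{\sym}\circ\mathbf{J}$.

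For the second step I would invoke Theorem~\ref{thm:Kostant-Siegel}(iii)---the Collective Hamiltonian Theorem---according to which $\mathbf{J}$ is a Poisson map from $(\mathbb{H}_{d},\PB{\,\cdot\,}{\,\cdot\,}_{\mathbb{H}_{d}})$ to $(\sym(2d,\R),\PB{\,\cdot\,}{\,\cdot\,}^{+}_{\sym(2d,\R)})$; since $H_{\mathbb{H}_{d}} = h_{\sym}\circ\mathbf{J}$, the Hamiltonian vector field $X_{\mathbb{H}_{d}}$ is then $\mathbf{J}$-related to the $(+)$-Lie--Poisson vector field of $h_{\sym}$, i.e.\ $T\mathbf{J}\circ X_{\mathbb{H}_{d}} = X_{h_{\sym}}\circ\mathbf{J}$, and equivariance (Theorem~\ref{thm:Kostant-Siegel}(ii)) ensures that $\mathbf{J}$ maps onto the coadjoint orbit $\mathcal{O}_{\Sigma}$ through $\Sigma = \mathbf{J}(\mathcal{C})$ and that $X_{h_{\sym}}$ is tangent to it, so that $X_{\mathcal{O}}$ as defined in the statement is precisely the $(+)$-Lie--Poisson flow of $h_{\sym}$ on $\mathcal{O}_{\Sigma}$. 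It remains to make this explicit: from $h_{\sym}(\Sigma) = -\tr(\Sigma\,D^{2}H_{\rm cl})$ one reads off $\delta h_{\sym}/\delta\Sigma = -D^{2}H_{\rm cl}$, and substituting into $\dot{\Sigma} = -\ad^{*}_{\delta h_{\sym}/\delta\Sigma}\Sigma$ with the formula~\eqref{eq:ad_star-Sym} for $\ad^{*}$ gives $\dot{\Sigma} = \mathbb{J}D^{2}H_{\rm cl}\,\Sigma - \Sigma D^{2}H_{\rm cl}\,\mathbb{J}$, which is \eqref{eq:Lie-Poisson}. No step presents a genuine difficulty; the one thing that requires care is the bookkeeping of signs---the $(+)$-bracket of Theorem~\ref{thm:Kostant-Siegel}(iii) (for which the equation of motion carries the minus sign above), the minus sign in the definition of $h_{\sym}$, and the sign in $\ad^{*}_{\xi}\mu = \mathbb{J}\xi\mu - \mu\xi\mathbb{J}$---and, if one wishes an independent check, the result can be confirmed by differentiating $\sigma(\mathcal{C}(t))$ directly along the flow~\eqref{eq:OhLe-AB}.
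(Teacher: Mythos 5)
Your proposal is correct and follows essentially the same route as the paper: verify the identity $H_{\mathbb{H}_{d}} = h_{\sym}\circ\mathbf{J}$ by the block-trace computation (which you carry out more explicitly than the paper's ``a simple calculation shows''), then push the dynamics forward along $\mathbf{J}$ to obtain the Lie--Poisson equation, and finally evaluate $\ad^{*}_{-D^{2}H_{\rm cl}}\Sigma$ via \eqref{eq:ad_star-Sym}. The only cosmetic difference is that you deduce $T\mathbf{J}\circ X_{\mathbb{H}_{d}} = X_{h_{\sym}}\circ\mathbf{J}$ directly from the Poisson-map property of $\mathbf{J}$ (part (iii) of Theorem~\ref{thm:Kostant-Siegel}, which you mislabel as the Collective Hamiltonian Theorem), whereas the paper combines the Collective Hamiltonian Theorem with the infinitesimal equivariance identity $T\mathbf{J}\circ\xi_{\mathbb{H}_{d}} = -\ad^{*}_{\xi}\mathbf{J}$; these are equivalent pieces of the same momentum-map machinery and yield the same conclusion.
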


\begin{proof}
  It is easy to see that $D^{2}H_{\rm cl} =
  \begin{tbmatrix}
    D^{2}V & 0 \\
    0 & I_{d}/m
  \end{tbmatrix}$ and thus
  \begin{equation*}
    h_{\sym}(\Sigma) = -\tr\parentheses{ \frac{\Sigma_{22}}{m} + \Sigma_{11}D^{2}V }.
  \end{equation*}
  Then a simple calculation shows that $H_{\mathbb{H}_{d}}(\mathcal{C}) = h_{\sym} \circ \mathbf{J}(\mathcal{C})$ for any $\mathcal{C} \in \mathbb{H}_{d}$.

  Now, by the Collective Hamiltonian Theorem (see, e.g., \citet[Theorem~12.4.2]{MaRa1999}), the vector field $X_{\mathbb{H}_{d}}$ satisfies, for any $\mathcal{C} \in \mathbb{H}_{d}$,
  \begin{equation*}
    X_{\mathbb{H}_{d}}(\mathcal{C}) = X_{h_{\sym} \circ \mathbf{J}}(\mathcal{C}) = \parentheses{ \Fd{h_{\sym}}{\Sigma} }_{\mathbb{H}_{d}}(\mathcal{C}),
  \end{equation*}
  where $\parentheses{ \delta h_{\sym}/\delta\Sigma }_{\mathbb{H}_{d}}$ stands for the infinitesimal generator of $\delta h_{\sym}/\delta\Sigma \in \sym(2d,\R) \cong \sp(\R^{2d})$.
  Then we have
  \begin{equation*}
    T_{z}\mathbf{J} \cdot X_{\mathbb{H}_{d}}(\mathcal{C}) = T_{z}\mathbf{J} \cdot \parentheses{ \Fd{h_{\sym}}{\Sigma} }_{\mathbb{H}_{d}}(\mathcal{C}).
  \end{equation*}
  However, by the equivariance $\mathbf{J} \circ \Phi_{S} = \Ad_{S^{-1}}^{*} \mathbf{J}$ of the momentum map for any $S \in \Sp(\R^{2d})$, we obtain, for any $\xi \in \sp(\R^{2d})$,
  \begin{equation*}
    T_{z}\mathbf{J} \circ \xi_{\mathbb{H}_{d}}(\mathcal{C}) = -\ad_{\xi}^{*} \mathbf{J}(\mathcal{C}),
  \end{equation*}
  and thus
  \begin{equation*}
    T_{z}\mathbf{J} \cdot X_{\mathbb{H}_{d}}(\mathcal{C}) = -\ad_{\delta h_{\sym}/\delta\Sigma}^{*} \mathbf{J}(\mathcal{C}) = X_{\mathcal{O}} \circ \mathbf{J}(\mathcal{C})
  \end{equation*}
  with 
  \begin{equation*}
    X_{\mathcal{O}}(\Sigma) \defeq -\ad^{*}_{\delta h_{\sym}/\delta \Sigma} \Sigma.
  \end{equation*}
  But then this is nothing but the Hamiltonian vector field defined by $\ins{X_{\mathcal{O}}} \Omega_{\mathcal{O}}^{+} = \d{h_{\sym}}$ on the coadjoint orbit $\mathcal{O}_{\Sigma}$, where $\Omega_{\mathcal{O}}^{+}$ is the KKS symplectic form~\eqref{eq:KKS-sym}.
  Then the formula~\eqref{eq:ad_star-Sym} for $\ad^{*}$ with $\delta h_{\sym}/\delta \Sigma = -D^{2}H_{\rm cl}$ yields \eqref{eq:Lie-Poisson}.
\end{proof}

\begin{remark}
  The Lie--Poisson equation~\eqref{eq:Lie-Poisson} is compatible with the dynamics on $\Sp(\R^{2d})$ due to \citet{Ha1980, Ha1981, Hagedorn1985, Ha1998} (see also \citet[Section~7]{Li1986} and \citet[Section~V.1]{Lu2008}).
  Hagedorn parametrizes an element $S \in \Sp(\R^{2d})$ such that $\pi_{\U(d)}(S) = \mathcal{C} = \mathcal{A} + \rmi\mathcal{B}$ as $S =
  \begin{tbmatrix}
    \Re Q & \Im Q \smallskip\\
    \Re P & \Im P
  \end{tbmatrix}$ with $Q, P \in \Mat_{d}(\C),$ $Q^{T}P - P^{T}Q = 0$, and $Q^{*}P - P^{*}Q = 2\rmi I_{d}$; these conditions are equivalent to $S \in \Sp(\R^{2d})$; see \eqref{def:Sp2d-block}.
  Then the equations~\eqref{eq:OhLe-AB} for $\mathcal{A}$ and $\mathcal{B}$ are replaced by
  \begin{equation*}
    \dot{Q} = \frac{P}{m},
    \qquad
    \dot{P} =  -D^{2}V\,Q,
  \end{equation*}
  which are equivalent to
  \begin{equation}
    \label{eq:S_equation}
    \dot{S} = \mathbb{J} D^{2}H_{\rm cl}\, S.
  \end{equation}
  Since $\pi_{\U(d)}(S) = \mathcal{C}$, we have $\Sigma = \mathbf{J}(\mathcal{C}) = \hat{\mathbf{J}}(S) = S S^{T}$, and then \eqref{eq:S_equation} gives \eqref{eq:Lie-Poisson}.
\end{remark}

\section{Dynamics of Gaussian State Wigner Function}
\label{sec:DynamicsOfSCSWigner}
\subsection{Dynamics under Non-quadratic Potentials}
If the potential $V$ is not quadratic, the equations for $z = (q,p)$ and the covariance matrix $\Sigma$ must be coupled as in \eqref{varianceformulation}.
Hence the simple collectivization presented in the previous section does not provide the bridge between the Gaussian wave packet dynamics~\eqref{eq:OhLe} and the Gaussian Wigner dynamics \eqref{varianceformulation}.
However, fortunately, it turns out that a simple modification of the approach from the previous section provides the desired bridge between them.

As mentioned in Section~\ref{ssec:Symplectic_GWPD}, the dynamics of the Gaussian wave packet~\eqref{eq:chi} is reduced to the Hamiltonian system~\eqref{eq:HamiltonianSystem-GWP} with symplectic form~\eqref{eq:Omega} and Hamiltonian~\eqref{eq:H} defined on $\mathcal{P} = \R^{2d} \times \mathbb{H}_{d}$.
One may write the symplectic form~\eqref{eq:Omega} on $\mathcal{P} = \R^{2d} \times \mathbb{H}_{d}$ as
\begin{equation*}
  \Omega = \pr_{\R^{2d}}^{*} \Omega_{\R^{2d}} - \frac{\hbar}{4} \pr_{\mathbb{H}_{d}}^{*} \Omega_{\mathbb{H}_{d}}
\end{equation*}
with the natural projections $\pr_{\R^{2d}}\colon \mathcal{P} \to \R^{2d}$ and $\pr_{\mathbb{H}_{d}}\colon \mathcal{P} \to \mathbb{H}_{d}$, whereas the corresponding Poisson bracket on $\mathcal{P}$ is given by \eqref{eq:PB-GWP}.
It is straightforward to adapt Theorem~\ref{thm:Kostant-Siegel} to this setting to relate the Gaussian wave packet dynamics~\eqref{eq:OhLe} with the dynamics~\eqref{varianceformulation} of the Gaussian Wigner function.
\begin{proposition}
  \label{prop:Kostant-GWP}
  Let $\mathsf{G} \defeq \R^{2d} \times \Sp(\R^{2d})$ and $\mathcal{P} = \R^{2d} \times \mathbb{H}_{d}$, and let $\Psi\colon \mathsf{G} \times \mathcal{P} \to \mathcal{P}$ be the $\mathsf{G}$-action on $\mathcal{P}$ defined by
  \begin{equation*}
    \Psi_{(\delta z, S)}(z, \mathcal{C}) \defeq (z + \mathbb{J}\delta z, \Phi_{S}(\mathcal{C})).
  \end{equation*}
  Then:
  \begin{enumerate}[(i)]
  \item The corresponding momentum map $\mathbf{M}\colon \mathcal{P} \to \R^{2d} \times \sp(\R^{2d})^{*} \cong \R^{2d} \times \sym(2d,\R)$ is given by
    \begin{equation*}
      \mathbf{M}(z, \mathcal{C}) = (z, \sigma(\mathcal{C}))
    \end{equation*}
    and is equivariant, where $\sigma$ is defined in \eqref{eq:sigma}.
    \smallskip
  \item $\mathbf{M}$ is Poisson with respect to the Poisson brackets~\eqref{eq:PB-GWP} and \eqref{eq:PB-GaussianWigner}.
    \smallskip
  \item Let $\mathcal{O} \subset \sp(\R^{2d})^{*} \cong \sym(2d,\R)$ be a coadjoint orbit and define a symplectic form $\omega$ on $\R^{2d} \times \mathcal{O}$ by
    \begin{equation*}
      \omega \defeq \pi_{\R^{2d}}^{*}\Omega_{\R^{2d}} - \frac{\hbar}{4}\pi_{\mathcal{O}}^{*}\Omega_{\mathcal{O}}^{+},
    \end{equation*}
    where $\pi_{\R^{2d}}\colon \R^{2d} \times \mathcal{O} \to \R^{2d}$ and $\pi_{\mathcal{O}}\colon \R^{2d} \times \mathcal{O} \to \mathcal{O}$ are natural projections.
    Then the pull-back by $\mathbf{M}$ of the symplectic form $\omega$ is the symplectic form $\Omega$ in \eqref{eq:Omega} on $\mathcal{P}$, i.e., $\mathbf{M}^{*}\omega = \Omega$.
    \smallskip
  \item Let $\mathcal{O}_{\Sigma}$ be the coadjoint orbit $\mathcal{O}_{\Sigma} \subset \sym(2d,\R) \cong \sp(\R^{2d})^{*}$ through $\Sigma = \sigma(\mathcal{C})$.
    The vector field $X_{h}$ on $\R^{2d} \times \mathcal{O}_{\Sigma}$ defined by $X_{h} \circ \mathbf{M} = T\mathbf{M} \circ X_{H}$, where $X_{H}$ is given in \eqref{eq:HamiltonianSystem-GWP}, is the Hamiltonian vector field with respect to the above symplectic form $\omega$ and the Hamiltonian
    \begin{equation}
      \label{eq:h}
      h(z, \Sigma) \defeq H_{\rm cl}(z) - \frac{\hbar}{4}h_{\sym}(\Sigma)
      = \frac{p^{2}}{2m} + V(q) + \frac{\hbar}{4} \tr\parentheses{ \frac{\Sigma_{22}}{m} + \Sigma_{11}D^{2}V(q) },
    \end{equation}
    i.e., $\ins{X_{h}} \omega = \d{h}$, and is given by \eqref{varianceformulation}, or more concretely
    \begin{equation}
      \label{eq:BoTr}
      \begin{array}{c}
        \DS \dot{q} = \frac{p}{m},
        \qquad
        \DS \dot{p} = -\pd{}{q}\brackets{ V(q) + \frac{\hbar}{4} \tr\parentheses{ \Sigma_{11}D^{2}V(q) } },
        \medskip\\
        \DS \dot{\Sigma} = -\ad^{*}_{\delta h/\delta \Sigma} \Sigma = \mathbb{J} D^{2}H_{\rm cl}(z)\, \Sigma - \Sigma D^{2}H_{\rm cl}(z)\, \mathbb{J}.
      \end{array}
    \end{equation}
  \end{enumerate}
\end{proposition}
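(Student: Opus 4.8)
The plan is to import Theorem~\ref{thm:Kostant-Siegel} almost verbatim, exploiting the fact that both the $\mathsf{G}$-action $\Psi$ and the symplectic form $\Omega$ respect the product decomposition $\mathcal{P} = \R^{2d}\times\mathbb{H}_{d}$: on the $\mathbb{H}_{d}$ factor $\Psi$ restricts to the Siegel action $\Phi$ of \eqref{eq:Sp_action} and $\Omega$ restricts (up to the scalar $-\hbar/4$) to $\Omega_{\mathbb{H}_{d}}$, while on the $\R^{2d}$ factor $\Psi$ is simply the translation $z\mapsto z+\mathbb{J}\delta z$ on the symplectic vector space $(\R^{2d},\Omega_{\R^{2d}})$, for which the momentum map is elementary. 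The one non-formal input I record at the outset is that $\sigma$ in \eqref{eq:sigma} restricts to a \emph{diffeomorphism} of $\mathbb{H}_{d}$ onto a single coadjoint orbit $\mathcal{O}\subset\sym(2d,\R)$, namely the $\Ad^{*}$-orbit of $I_{2d}=\sigma(\rmi I_{d})$ (the symmetric positive-definite symplectic matrices): this is exactly the content of the identity $\mathbf{J}\circ\pi_{\U(d)}=\hat{\mathbf{J}}$ with $\hat{\mathbf{J}}(S)=SS^{T}$ established in the proof of Theorem~\ref{thm:Kostant-Siegel}, together with injectivity of $\sigma$ (one recovers $\mathcal{B}=\Sigma_{11}^{-1}$, $\mathcal{A}=\Sigma_{11}^{-1}\Sigma_{12}$). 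Thus $\mathbf{M}=\id\times\sigma$ is a global diffeomorphism from $\mathcal{P}$ onto $\R^{2d}\times\mathcal{O}$, which makes (iii)--(iv) essentially a change-of-variables statement rather than a genuine reduction.

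For (i), I compute the infinitesimal generator of $(\delta z,\xi)\in\mathfrak{g}=\R^{2d}\times\sp(\R^{2d})$ on $\mathcal{P}$: it is the constant field $\mathbb{J}\delta z$ in the $z$-slot plus the field $\xi_{\mathbb{H}_{d}}$ computed in the proof of Theorem~\ref{thm:Kostant-Siegel} in the $\mathcal{C}$-slot. Contracting with $\Omega$ and using $\ins{\mathbb{J}\delta z}\Omega_{\R^{2d}} = \d(\delta z\cdot z)$ (since $\mathbb{J}^{T}\mathbb{J}=I$) together with the momentum-map identity $\ins{\xi_{\mathbb{H}_{d}}}\Omega_{\mathbb{H}_{d}} = \d\tip{\sigma(\mathcal{C})}{\xi}$ from Theorem~\ref{thm:Kostant-Siegel}(i), one gets $\ins{(\delta z,\xi)_{\mathcal{P}}}\Omega = \d[\,\delta z\cdot z - \tfrac{\hbar}{4}\tr(\sigma(\mathcal{C})\xi)\,]$, so that, with the identification of $\mathfrak{g}^{*}$ with $\R^{2d}\times\sym(2d,\R)$ normalized exactly as in \eqref{eq:PB-GaussianWigner} (this is what absorbs the factor $\hbar/4$), $\mathbf{M}(z,\mathcal{C})=(z,\sigma(\mathcal{C}))$. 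Equivariance in the $\Sp(\R^{2d})$-slot is precisely Theorem~\ref{thm:Kostant-Siegel}(ii); in the $\R^{2d}$-slot the translation momentum map $z\mapsto z$ is equivariant up to the constant momentum-shift cocycle $\delta z\mapsto\mathbb{J}\delta z$, which is harmless because that cocycle is exactly the one encoded by the canonical term $\{\,\cdot\,,\,\cdot\,\}_{\R^{2d}}$ in \eqref{eq:PB-GaussianWigner}. Part (ii) then follows either from the general principle that an (affinely) equivariant momentum map is Poisson, or directly: split $\{f,g\}$ by \eqref{eq:PB-GaussianWigner} and $\{F,G\}_{\mathcal{P}}$ by \eqref{eq:PB-GWP}; the $\R^{2d}$-pieces agree because $\mathbf{M}$ is the identity on that factor and does not couple $z$ to $\mathcal{C}$, while the $\sym(2d,\R)$-pieces agree by Theorem~\ref{thm:Kostant-Siegel}(iii), the common prefactor $-4/\hbar$ matching on both sides.

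For (iii), since $\pi_{\R^{2d}}\circ\mathbf{M}=\pr_{\R^{2d}}$ and $\pi_{\mathcal{O}}\circ\mathbf{M}=\sigma\circ\pr_{\mathbb{H}_{d}}$, I pull $\omega$ back term by term: $\mathbf{M}^{*}\omega = \pr_{\R^{2d}}^{*}\Omega_{\R^{2d}} - \tfrac{\hbar}{4}\,\pr_{\mathbb{H}_{d}}^{*}\sigma^{*}\Omega_{\mathcal{O}}^{+} = \pr_{\R^{2d}}^{*}\Omega_{\R^{2d}} - \tfrac{\hbar}{4}\,\pr_{\mathbb{H}_{d}}^{*}\Omega_{\mathbb{H}_{d}}$ by Theorem~\ref{thm:Kostant-Siegel}(iv), and this equals $\Omega$ by \eqref{eq:Omega}; combined with $\sigma\colon\mathbb{H}_{d}\to\mathcal{O}$ being a diffeomorphism, $\mathbf{M}$ is a symplectomorphism. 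For (iv) I first verify $H=h\circ\mathbf{M}$ by substituting $\Sigma_{11}=\mathcal{B}^{-1}$, $\Sigma_{22}=\mathcal{A}\mathcal{B}^{-1}\mathcal{A}+\mathcal{B}$ into \eqref{eq:h} and matching with \eqref{eq:H} via $\tr(\mathcal{B}^{-1}\mathcal{A}^{2})=\tr(\mathcal{A}\mathcal{B}^{-1}\mathcal{A})$ and $\tr(\mathcal{B}^{-1}\mathcal{B}^{2})=\tr\mathcal{B}$. Since $\mathbf{M}$ is a symplectomorphism by (iii) that intertwines $H$ with $h$, it pushes $X_{H}$ to the $(\omega,h)$-Hamiltonian vector field, so $X_{h}$ as defined satisfies $\ins{X_{h}}\omega=\d h$. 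Writing this out, the canonical $\R^{2d}$-block gives $\dot q = \partial h/\partial p = p/m$ and $\dot p = -\partial h/\partial q = -\partial_{q}[V(q)+\tfrac{\hbar}{4}\tr(\Sigma_{11}D^{2}V(q))]$ (the $\tr(\Sigma_{22}/m)$ term being $q$-independent), and the $-\tfrac{\hbar}{4}\Omega_{\mathcal{O}}^{+}$-block, via the KKS formula recalled in the proof of Corollary~\ref{cor:collective_dynamics} together with $\delta h/\delta\Sigma = \tfrac{\hbar}{4}D^{2}H_{\rm cl}(z)$ and the formula \eqref{eq:ad_star-Sym} for $\ad^{*}$, gives $\dot\Sigma = \mathbb{J}D^{2}H_{\rm cl}(z)\,\Sigma - \Sigma D^{2}H_{\rm cl}(z)\,\mathbb{J}$. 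This is \eqref{eq:BoTr}, equivalently \eqref{varianceformulation}.

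The main obstacle is not conceptual but is careful bookkeeping of the several reciprocal scalings $\hbar/4$ versus $4/\hbar$ distributed across $\Omega$ in \eqref{eq:Omega}, the brackets \eqref{eq:PB-GWP} and \eqref{eq:PB-GaussianWigner}, the KKS form $\Omega_{\mathcal{O}}^{+}$, and the pairing fixing the functional derivatives, so that the momentum map lands on exactly the covariance matrix $\sigma(\mathcal{C})$ (not a scalar multiple) and the $\Sigma$-equation comes out with coefficient one. The only other point that deserves a word of care is the non-strict (affine) equivariance of the translation part of $\mathbf{M}$, which, as noted, is precisely compensated by the canonical summand in \eqref{eq:PB-GaussianWigner} and therefore causes no trouble in (ii). Everything else is a direct consequence of Theorem~\ref{thm:Kostant-Siegel} applied to the $\mathbb{H}_{d}$ factor.
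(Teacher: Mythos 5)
Your proof is correct and follows the same overall strategy as the paper: everything is reduced to Theorem~\ref{thm:Kostant-Siegel} via the product decomposition $\mathcal{P}=\R^{2d}\times\mathbb{H}_{d}$, with the translation part on $(\R^{2d},\Omega_{\R^{2d}})$ handled elementarily. The only substantive difference is in parts (iii)--(iv): the paper cites Kostant's coadjoint orbit covering theorem (using transitivity of $\Psi$) for $\mathbf{M}^{*}\omega=\Omega$ and the Collective Hamiltonian Theorem for the push-forward of $X_{H}$, whereas you replace both citations by direct arguments resting on the observation that $\sigma$ is injective (recover $\mathcal{B}=\Sigma_{11}^{-1}$, $\mathcal{A}=\Sigma_{11}^{-1}\Sigma_{12}$), so that $\mathbf{M}$ is a symplectic diffeomorphism onto $\R^{2d}\times\mathcal{O}$ and (iii)--(iv) become change-of-variables statements. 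This buys a more self-contained and elementary proof at the cost of an extra (easily verified) injectivity lemma; the paper's route is shorter and would also apply when the momentum map is not injective. Your bookkeeping of the $\hbar/4$ versus $4/\hbar$ normalizations and the verification $H=h\circ\mathbf{M}$ are consistent with the paper's conventions.
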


\begin{proof}
  The expression of the momentum map $\mathbf{M}$ follows from a straightforward calculation and the equivariance of $\mathbf{J}$ from Theorem~\ref{thm:Kostant-Siegel}.
  That $\mathbf{M}$ is Poisson is clear from the fact that $\mathbf{M} = \id_{\R^{2d}} \times \mathbf{J}$ as well as the definitions of the Poisson brackets and the fact that $\mathbf{J}$ is Poisson in the sense of \eqref{eq:J-Poisson}.
  Since the action $\Psi$ is clearly transitive and $\mathcal{P}$ is a symplectic manifold with the symplectic form $\Omega$ in \eqref{eq:Omega}, it follows that $\mathbf{M}^{*}\omega = \Omega$ again from Kostant's coadjoint orbit covering theorem.
  The last statement follows easily from the Collective Hamiltonian Theorem (see, e.g., \citet[Theorem~12.4.2]{MaRa1999}) following a similar argument as in Corollary~\ref{cor:collective_dynamics}.
\end{proof}

\begin{remark}
  Assuming some regularity and decay properties of the potential $V$, one can show that the Hamiltonian~\eqref{eq:h} is in fact an $O(\hbar^{2})$ approximation to the expectation value of the classical Hamiltonian~\eqref{eq:H_cl} with respect to the Gaussian Wigner function~\eqref{eq:W_0} (where $\Sigma$ is assumed to be positive-definite), i.e.,
  \begin{equation*}
    \exval{H_{\rm cl}}_{0}(z, \Sigma) = \int_{\R^{2d}} H_{\rm cl}(\zeta)\, \mathcal{W}_{0}(\zeta)\,\mathrm{d}\zeta
    = h(z, \Sigma) + O(\hbar^{2}),
  \end{equation*}
  just as \eqref{eq:H} is an $O(\hbar^{2})$ approximation to the expectation value $\bigl\langle \psi_{0}, \hat{H}\psi_{0} \bigr\rangle$.
\end{remark}

\subsection{Numerical Results---The Effect of the Correction Term}
As mentioned earlier, both \eqref{eq:OhLe} and \eqref{eq:BoTr} differ from those time evolution equations that appeared in the previous works \cite{He1975a,He1976b,Li1986,Ha1980,Ha1981,Hagedorn1985,Ha1998,CoRo2012} by the $O(\hbar)$ correction term to the potential; see the time evolution equation for the momentum $p$.
More specifically, the classical Hamiltonian system for $(q,p)$ with the potential $V(q)$ is replaced by that with the corrected potential
\begin{equation}
  \label{eq:V_hbar}
  V_{\hbar}(q,\mathcal{B}) \defeq V(q) + \frac{\hbar}{4} \tr\parentheses{ \mathcal{B}^{-1} D^{2}V(q) }
  \quad\text{or}\quad
  V_{\hbar}(q,\Sigma) \defeq V(q) + \frac{\hbar}{4} \tr\parentheses{ \Sigma_{11} D^{2}V(q) },
\end{equation}
where we call both of them $V_{\hbar}$ with an abuse of notation.
Notice that, as a result, the equations for $(q,p)$ are coupled with the rest of the system through the $O(\hbar)$ correction term.
What is the effect of the correction term?
Here we limit ourselves to numerical experiments and set aside the proof of the asymptotic error in $\hbar$ as future work.
Our test case is a two-dimensional problem, i.e., $d = 2$, with the torsional potential
\begin{equation*}
  \displaywidth=\parshapelength\numexpr\prevgraf+2\relax
  V(q^{1}, q^{2}) = 2 - \cos q^{1} - \cos q^{2}.
\end{equation*}
This is the type of potential used to model torsional forces between molecules; see Fig.~\ref{fig:Torsions} and, e.g., \citet[Section~2.2.4]{Je2007}.
\begin{figure}
  \centering
  \includegraphics[width=.3\linewidth]{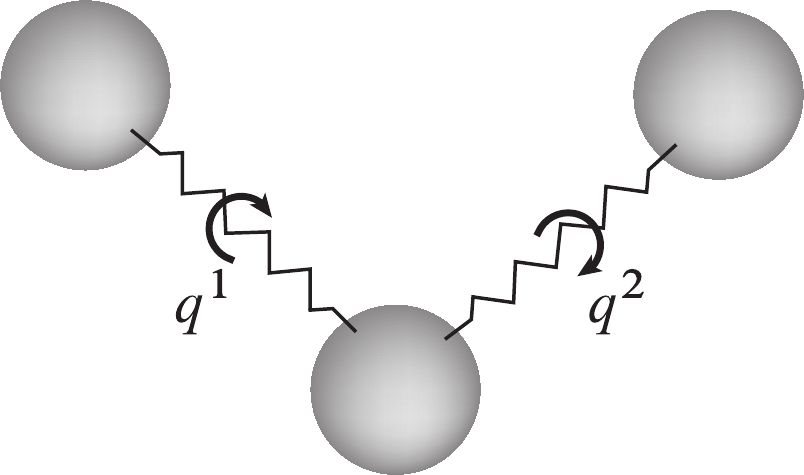}
  \caption{Torsions between molecules}
  \label{fig:Torsions}
\end{figure}
We compare the dynamics of the phase space variables $z = (q,p)$ to the dynamics of the expectation values $\exval{\hat{z}} \defeq (\exval{\hat{x}}, \exval{\hat{p}})$ of the standard position and momentum operators.
Directly solving the Schr\"odinger equation~\eqref{eq:Schroedinger} in the semiclassical regime $\hbar \ll 1$ numerically is a challenge due to its highly oscillatory solutions.
Therefore, as an effective alternative, we used Egorov's Theorem~\cite{Eg1969,CoRo2012,LaRo2010} or the Initial Value Representation (IVR) method~\cite{Mi1970,Mi1974b,WaSuMi1998,Mi2001} to compute the time evolution of the expectation values $\exval{\hat{z}}$; it is known that the Egorov/IVR method gives an $O(\hbar^{2})$ approximation to the exact evolution of the expectation values.

We set $m = 1$ and choose the initial condition
\begin{equation*}
  q(0) = (1,0),
  \qquad
  p(0) = (-1,1),
  \qquad
  \mathcal{A}(0) =
  \mathcal{B}(0) =
  \begin{bmatrix}
    1 & 0.5 \\
    0.5 & 1
  \end{bmatrix}.
\end{equation*}
For the Egorov/IVR method, the initial Wigner function is the Gaussian~\eqref{eq:Wigner-psi_0} corresponding to the initial condition.
Note that, by Proposition~\ref{prop:Kostant-GWP}, the Gaussian wave packet dynamics~\eqref{eq:OhLe} and the Gaussian Wigner dynamics~\eqref{eq:BoTr} give the same dynamics for $(q(t),p(t))$.
We solved \eqref{eq:OhLe} using the variational splitting integrator of \citet{FaLu2006} (see also \citet[Section~IV.4]{Lu2008}) and used the St\"ormer--Verlet method~\cite{Ve1967} to solve the classical Hamiltonian system; it is known that the variational splitting integrator converges to the St\"ormer--Verlet method as $\hbar \to 0$~\cite{FaLu2006}.
The time step is $0.01$ in all the cases and 10,000 particles are sampled from the initial Wigner function for the Egorov/IVR calculations.

\begin{figure}[htbp]
  \centering
  \subfigure[Time evolution of $q^{1}$ for $\eps = 0.1$]{
    \includegraphics[width=.45\linewidth]{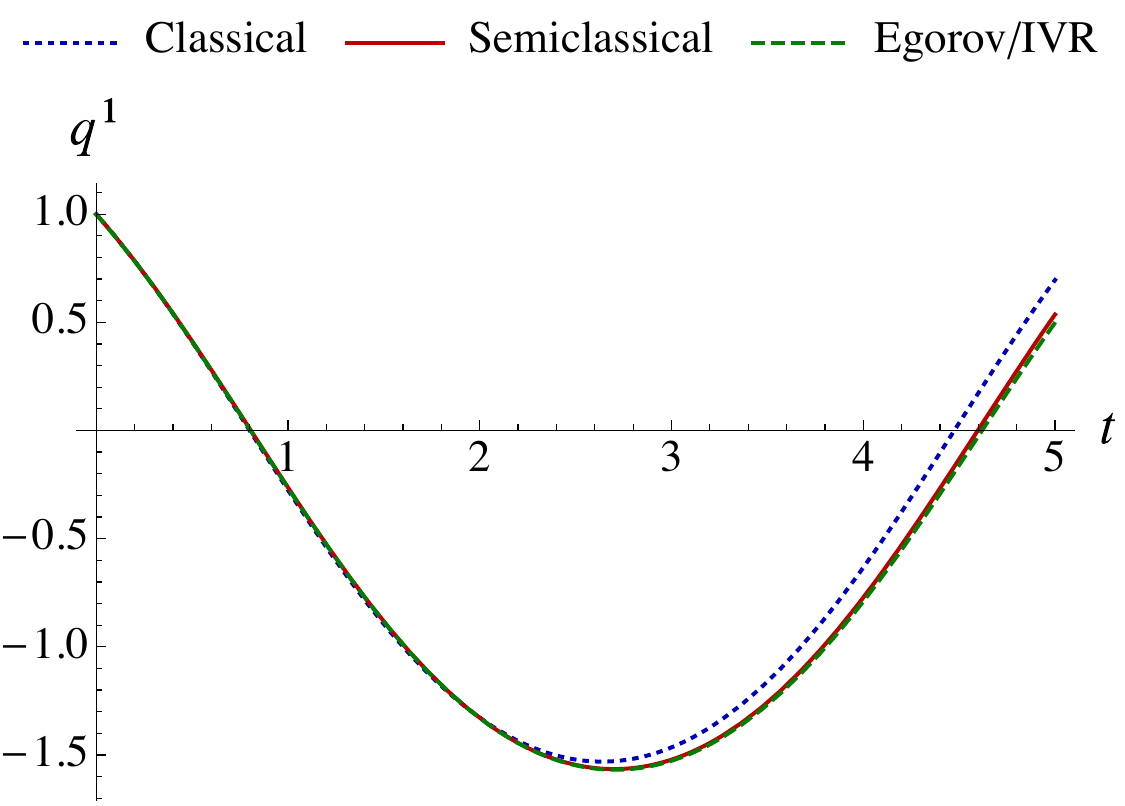}
    \label{fig:t-q1_01}
  }
  \qquad
  \subfigure[Time evolution of $p_{1}$ for $\eps = 0.1$]{
    \includegraphics[width=.45\linewidth]{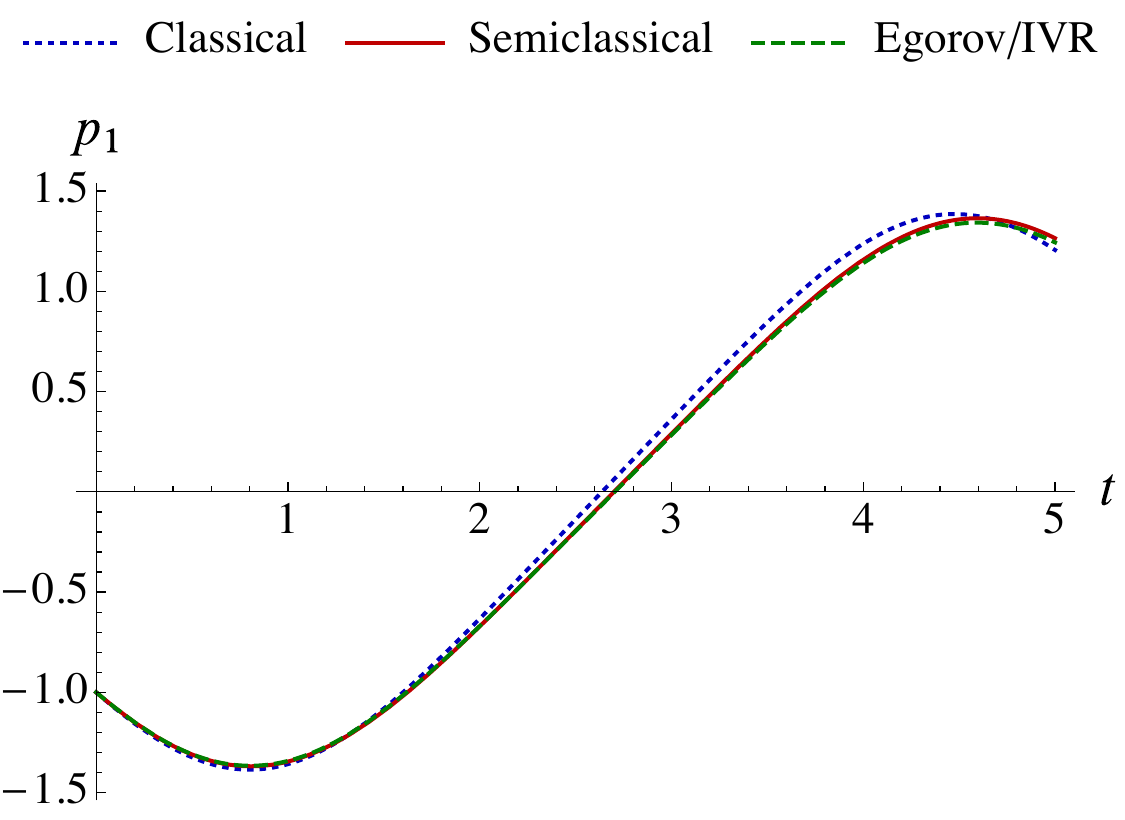}
    \label{fig:t-p1_01}
  }
  \medskip\\
  \subfigure[Convergence of error in observables $z = (q,p)$ as $\eps \to 0$]{
    \includegraphics[width=.465\linewidth]{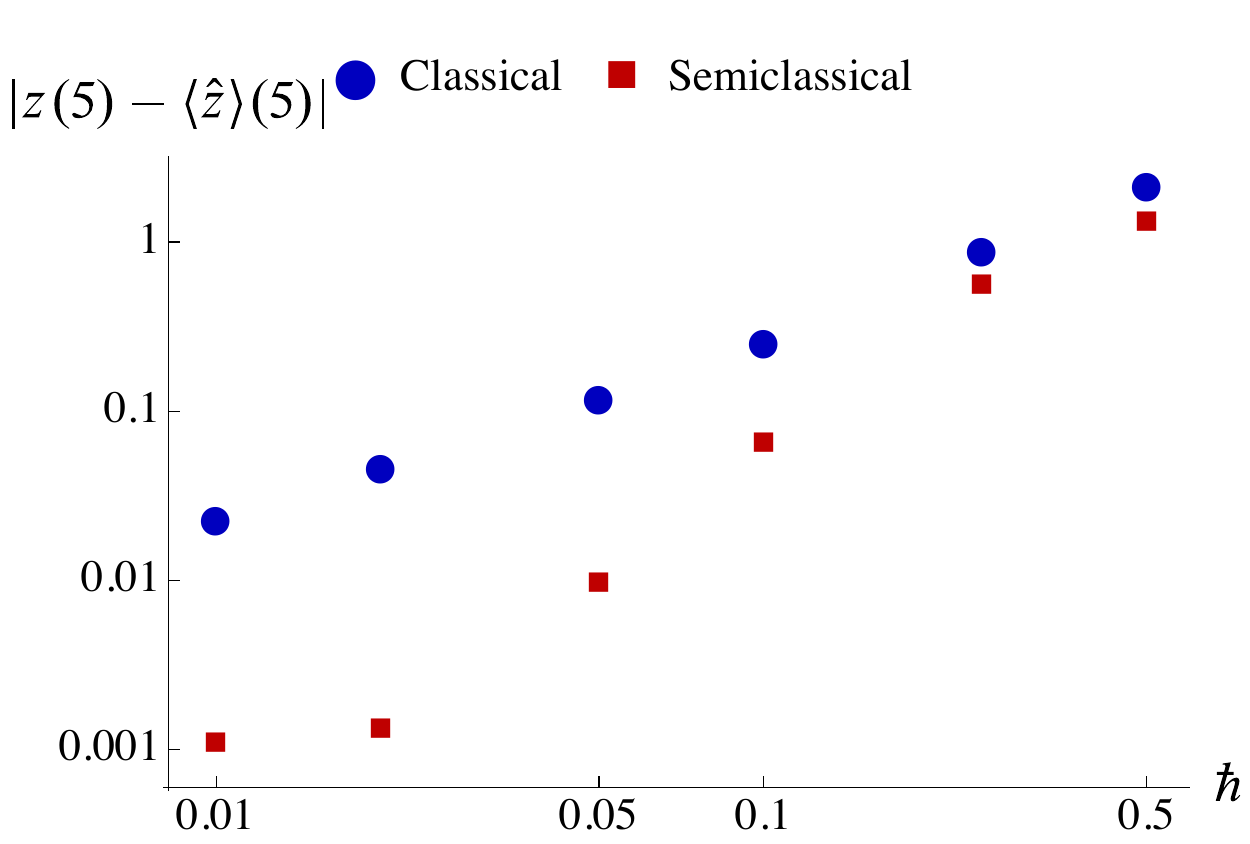}
    \label{fig:hbar-zerror}
  }
  \qquad
  \subfigure[$V(q_{\rm cl}(t))$, $V_{\eps}(q(t),\mathcal{B}(t))$, and $\exval{V}(t)$ for $\eps = 0.1$]{
    \includegraphics[width=.45\linewidth]{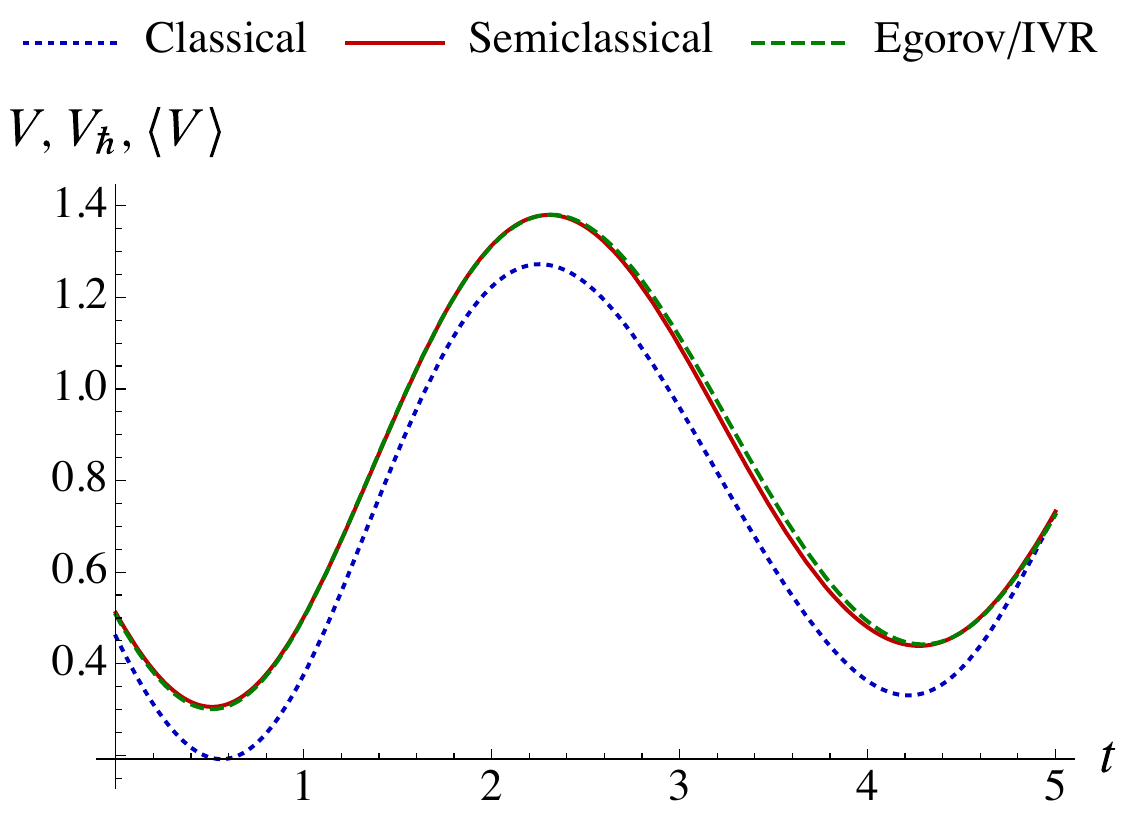}
    \label{fig:t-V_01}
  }
  \captionsetup{width=0.95\textwidth}
  \caption{\small
    (a)--(b)~Time evolution $(q^{1}(t),p_{1}(t))$ of classical, semiclassical, and Egorov/IVR solutions.
    The semiclassical solutions approximate the Egorov/IVR solutions much better than the classical solutions do.
    (c)~Errors in Euclidean norm in $T^{*}\R^{2} \cong \R^{4}$ of classical and semiclassical solution compared to expectation values obtained by Egorov/IVR algorithm at $t = 5$.
    The error $|z(5) - \exval{\hat{z}}(5)|$ is always smaller and converges faster as $\hbar \to 0$ with $z$ being the semiclassical solution than the classical one, indicating that the asymptotic error of $|z(5) - \exval{\hat{z}}(5)|$ as $\hbar \to 0$ is smaller with the semiclassical solution.
    (d)~Comparison of classical potential along classical solution, modified potential~\eqref{eq:V_hbar} along semiclassical solution, and expectation value of potential by Egorov/IVR.
    The modified potential with the semiclassical potential approximates the expectation value of the potential calculated using the Egorov/IVR method remarkably well, whereas the classical potential with the classical solutions deviates from it quite significantly.
  }
  \label{fig:CM-GWP-Egorov}
\end{figure}

The results (see Fig.~\ref{fig:CM-GWP-Egorov}) demonstrate that the semiclassical dynamics~\eqref{eq:OhLe} gives a better approximation to the expectation value dynamics of the Egorov algorithm compared to the classical solution $z_{\rm cl}(t) \defeq (q_{\rm cl}(t), p_{\rm cl}(t))$; recall that the classical solution has been commonly used for propagation of the phase space center of the coherent states~\cite{He1975a,He1976b,Li1986,Ha1980,Ha1981,Hagedorn1985,Ha1998,CoRo2012}.

In fact, as shown in Fig.~\ref{fig:hbar-zerror}, the error $|z(5) - \exval{\hat{z}}(5)|$ in Euclidean norm in $T^{*}\R^{2} \cong \R^{4}$ of semiclassical solution $z(t)$ of \eqref{eq:OhLe} converges faster than that of the classical solution $z_{\rm cl}(t)$ as $\hbar \to 0$.
The slowdown of the convergence of the error of the semiclassical solution around $\hbar = 0.01$ may be attributed to the lack of accuracy of the Egorov/IVR solution:
It involves a Monte-Carlo type numerical integration and thus the error is proportional to $1/\sqrt{N}$.
Since $N = 10,000$ here, a rough estimate of the error in Egorov/IVR solution due to the sampling is in the order of $0.01$.

Moreover, as shown in Fig.~\ref{fig:t-V_01}, the modified potential $V_{\hbar}$ from \eqref{eq:V_hbar} approximates the expectation value of the potential $\exval{V}$ with remarkable accuracy even for the relatively large $\hbar = 0.1$.
This result tends to justify our Hamiltonian formulation of the dynamics with the semiclassical Hamiltonians $H$ and $h$ and from \eqref{eq:H} and \eqref{eq:h} because $V_{\hbar}$ is nothing but the potential part of them.

\section*{Acknowledgments}
The authors are grateful to Fran\c{c}ois Gay-Balmaz, Darryl Holm, and Paul Skerritt for several discussions on this and related topics. C.T. acknowledges financial support by the Leverhulme Trust Research Project Grant 2014-112 and by the London Mathematical Society Grant No. 31320 (Applied Geometric Mechanics Network)

\appendix

\section{Moments of Wigner Function as a Momentum Map}
\label{appendix}
In this appendix, we find the expression~\eqref{eq:J_sstar} of the momentum map $\mathbf{J}_{\mathfrak{s}^{*}}$ corresponding to the action~\eqref{eq:Jac_action} of $\mathsf{Jac}(\R^{2d})$ on the space $\mathfrak{s}^{*}$ of Wigner functions, which is (formally) thought of as the dual of the space $\mathfrak{s}$ of observables.
Recall that we endowed the space $\mathfrak{s}^{*}$ of Wigner functions with the Lie--Poisson bracket \cite{BiMo1991}
\[
  \{F,K\}_{\mathfrak{s}^{*}}(\mathcal{W})=\int_{\R^{2d}}\!\mathcal{W}\left\{\!\!\left\{\frac{\delta F}{\delta \mathcal{W}},\frac{\delta K}{\delta \mathcal{W}}\right\}\!\!\right\}\,\mathrm{d}\zeta
\,,
\]
where $\{\!\{\cdot,\cdot\}\!\}$ denotes the Moyal bracket.
Upon considering a curve in $(S(s),{\sf z}(s),\theta(s))\in\mathsf{Jac}(\R^{2d})$ such that $(S(0),{\sf z}(0),\theta(0))=({\rm Id}, 0, 0)$ and $(S'(0),{\sf z}'(0),\theta'(0))=(\mathcal{S},\xi,\vartheta)$, we compute the infinitesimal generator of the action~\eqref{eq:Jac_action} as follows:
\[
\big((\mathcal{S},\xi,\vartheta)\cdot\mathcal{W}\big)(\zeta)=\frac{\mathrm{d}}{\mathrm{d}s}\bigg|_{s=0}\mathcal{W}(S(s)\zeta+{\sf z}(s))
=
(\mathcal{S}\zeta+\xi)\cdot\nabla\mathcal{W}(\zeta)
\,.
\]
Therefore, in order to show that \eqref{eq:J_sstar} is the momentum map $\mathbf{J}_{\mathfrak{s}^{*}}$ corresponding to the action~\eqref{eq:Jac_action}, we need to prove that 
\[
\left\{F,\,\frac{1}2\ip{\mathcal{S}}{\Bbb{J}^{T}\langle\zeta\otimes\zeta\rangle}_{\sp} +\xi\cdot\Bbb{J}^{T}\langle\zeta\rangle+\vartheta\langle1\rangle\right\}_{\mathfrak{s}^{*}}(\mathcal{W})
=\frac{\delta F}{\delta\mathcal{W}}\,(\mathcal{S}\zeta+\xi)\cdot\nabla\mathcal{W}(\zeta),
\]
where we used the expectation value notation~\eqref{eq:exval} as well as the inner product $\ip{\,\cdot\,}{\,\cdot\,}_{\sp}$ from \eqref{eq:ip-sp}.
This is verified by a direct calculation.
Indeed, we compute
\begin{align*}
&\int_{\R^{2d}} \mathcal{W}\left\{\!\!\left\{\frac{\delta F}{\delta \mathcal{W}},\frac{\delta}{\delta \mathcal{W}}\left(\frac{1}2\ip{\mathcal{S}}{\Bbb{J}^{T}\langle\zeta\otimes\zeta\rangle}_{\sp} +\xi\cdot\Bbb{J}^{T}\langle\zeta\rangle+\vartheta\langle1\rangle\right)\!\right\}\!\!\right\}\,\mathrm{d}\zeta
\\
=&\int_{\R^{2d}} \mathcal{W}\left\{\frac{\delta F}{\delta \mathcal{W}},\, \frac{1}2\tr( (\zeta\otimes\zeta)\,\Bbb{J}\mathcal{S}) +\xi\cdot\Bbb{J}^{T}\zeta+\vartheta\!\right\}_{\R^{2d}}\,\mathrm{d}\zeta,
\end{align*}
where we have used the fact that the Moyal bracket of two functions coincides with the canonical Poisson bracket whenever either of the two functions is a second-degree polynomial. In addition, we have
\begin{align*}
\int_{\R^{2d}} \mathcal{W}\left\{\frac{\delta F}{\delta \mathcal{W}},\, \frac{1}2\tr((\zeta\otimes\zeta)\,\Bbb{J}\mathcal{S}) +\xi\cdot\Bbb{J}^{T}\zeta+\vartheta\!\right\}_{\R^{2d}}\,\mathrm{d}\zeta
=&
\int_{\R^{2d}} \mathcal{W}\left(\nabla\frac{\delta F}{\delta\mathcal{W}}\right)\cdot\Bbb{J}\left(\Bbb{J}\mathcal{S}\zeta + \Bbb{J}\xi\right)\,\mathrm{d}\zeta
\\
=&
-\int_{\R^{2d}} \mathcal{W}\left(\nabla\frac{\delta F}{\delta\mathcal{W}}\right)\cdot\left(\mathcal{S}\zeta + \xi\right)\,\mathrm{d}\zeta
\\
=&
\int_{\R^{2d}} \frac{\delta F}{\delta\mathcal{W}}\left(\mathcal{S}\zeta +\xi\right)\cdot\nabla\mathcal{W}\,\mathrm{d}\zeta,
\end{align*}
where the last equality follows by integration by parts and by recalling that $\mathcal{S}\zeta$ is a Hamiltonian (divergenceless) vector field.

\section{Derivation of the Poisson Structure for Gaussian Moments}
\label{sec:PB-moments}
In this appendix, we derive the Poisson bracket~\eqref{eq:PB-GaussianWigner} on $\R^{2d} \times \sym(2d,\R)$ for the Gaussian moments from the Lie--Poisson bracket~\eqref{eq:PB-jac} on $\mathfrak{jac}(\R^{2d})^{*}$.
The first step is to employ the ``untangling map'' due to Krishnaprasad and Marsden \cite[Proposition 2.2]{KrMa1987}:
\begin{equation*}
  u\colon \widetilde{\mathfrak{jac}}(\R^{2d})^{*} \to \sp(\R^{2d}) \times \widetilde{\mathfrak{h}}(\R^{2d})^{*};
  \quad
  (\Pi, \lambda, \alpha) \mapsto \parentheses{ \Pi - \frac{\alpha^{-1}}{2} (\lambda \otimes \lambda)\mathbb{J}^{T}, \lambda, \alpha },
\end{equation*}
where we defined the open subsets
\begin{equation*}
  \widetilde{\mathfrak{jac}}(\R^{2d})^{*}
  \defeq \setdef{(\Pi, \lambda, \alpha) \in \mathfrak{jac}(\R^{2d})^{*}}{\alpha \neq 0},
  \qquad
  \widetilde{\mathfrak{h}}(\R^{2d})^{*}
  \defeq \setdef{(\lambda, \alpha) \in \mathfrak{h}(\R^{2d})^{*}}{\alpha \neq 0}
\end{equation*}
to avoid the singularity at $\alpha = 0$.
The untangling map $u$ is Poisson with respect to \eqref{eq:PB-jac} and the Poisson bracket
\begin{equation*}
  \PB{f}{g}_{\sp(\R^{2d})^{*} \times \widetilde{\mathfrak{h}}(\R^{2d})^{*}}(\tilde{\mu}, \lambda, \alpha) \defeq 
  \alpha \PB{f}{g}_{\R^{2d}}
  - \tr\parentheses{ \tilde{\mu}^{T} \brackets{ \Fd{f}{\tilde{\mu}}, \Fd{g}{\tilde{\mu}} } }
\end{equation*}
on $\sp(\R^{2d}) \times \widetilde{\mathfrak{h}}(\R^{2d})^{*} = \{ (\tilde{\mu}, \lambda, \alpha) \}$.
We then have
\begin{equation*}
  u \circ \mathbf{J}_{\mathfrak{s}^{*}}(\mathcal{W})
  = \parentheses{
    \frac{1}{2}\mathbb{J}^{T}\big( \exval{\zeta \otimes \zeta} - \exval{\zeta} \otimes \exval{\zeta} \big),\,
    \mathbb{J}^{T} \exval{\zeta},\,
    \exval{1}
  }.
\end{equation*}
Furthermore, we may identify $\sp(\R^{2d})^{*} \times \mathfrak{h}(\R^{2d})^{*}$ with $\R^{2d+1} \times \sym(2d,\R) = \{(\alpha, z, \mu)\}$ via the isomorphism
\begin{align*}
  \iota\colon \sp(\R^{2d})^{*} \times \mathfrak{h}(\R^{2d})^{*} &\to \R^{2d+1} \times \sym(2d,\R) \\
  (\tilde{\mu}, \lambda, \alpha) &\mapsto (\alpha, \mathbb{J}\lambda, \mathbb{J}\,\tilde{\mu}).
\end{align*}
See \citet{GaTr2012} for details on this isomorphism; the identification $\mathfrak{sp}(\Bbb{R}^{2d})^*\cong \sym(2d,\R)$ is explained in Section~\ref{ssec:sp-sym}.
As a result, we have
\begin{equation*}
  \iota \circ u \circ \mathbf{J}_{\mathfrak{s}^{*}}(\mathcal{W})
  = \parentheses{
    \exval{1},\,
    \exval{\zeta},\,
    \frac{1}{2}\parentheses{ \exval{\zeta \otimes \zeta} - \exval{\zeta} \otimes \exval{\zeta} }
  },
\end{equation*}
yielding the zeroth moment (as mentioned below, $\langle 1\rangle=1$ if $\mathcal{W}$ is normalized) as well as the first and second moments of $\mathcal{W}$ from \eqref{eq:moments}.
Let us write $\widetilde{\R}^{2d+1} \defeq \setdef{ (\alpha, z) }{ \alpha \neq 0 }$.
Then, restricting the map $\iota$ to $\sp(\R^{2d})^{*} \times \widetilde{\mathfrak{h}}(\R^{2d})^{*}$ and $\widetilde{\R}^{2d+1} \defeq \setdef{ (\alpha, z) }{ \alpha \neq 0 }$, we again obtain a Poisson map from $\sp(\R^{2d})^{*} \times \widetilde{\mathfrak{h}}(\R^{2d})^{*}$ to $\widetilde{\R}^{2d+1} \times \sym(2d,\R)$ with the Poisson bracket
\begin{equation}
  \label{eq:PB-GaussianWigner0}
  \PB{f}{g}_{\widetilde{\R}^{2d+1} \times \sym(2d,\R)}(\alpha, z, \mu) \defeq
  \alpha\PB{f}{g}_{\R^{2d}}
  - \tr\parentheses{
    \mu \brackets{
      \frac{\delta f}{\delta \mu}, \frac{\delta g}{\delta \mu}
    }_{\sym}
  },
\end{equation}
where $[\,\cdot\,, \,\cdot\,]_{\sym}$ is the Lie bracket on $\sym(2d,\R)$ defined in \eqref{eq:Lie_bracket-Sym} in Section~\ref{ssec:sp-sym}.

For example, for the Gaussian Wigner function $\mathcal{W}_{0}$ in \eqref{eq:W_0} with a positive-definite $2d \times 2d$ matrix $\Sigma$, we obtain
\begin{equation*}
  \iota \circ u \circ \mathbf{J}_{\mathfrak{s}^{*}}(\mathcal{W}_{0}) = \parentheses{ 1, z,\, \frac{\hbar}{4}\Sigma }.
\end{equation*}
This motivates us to reparametrize elements in $\widetilde{\R}^{2d+1} \times \sym(2d,\R)$ as $\{ (\alpha, z, \Sigma) \}$ with $\mu = \frac{\hbar}{4}\Sigma$.
Then the Poisson bracket~\eqref{eq:PB-GaussianWigner0} becomes
\begin{equation}
  \label{eq:PB-GaussianWigner1}
  \PB{f}{g}(\alpha,z,\Sigma) =
  \alpha\PB{f}{g}_{\R^{2d}}
  - \frac{4}{\hbar}\tr\parentheses{ \Sigma\! \left[ \Fd{f}{\Sigma}, \Fd{g}{\Sigma} \right]_{\sym} }.
\end{equation}
Now let us write $\mathbf{m} = (\alpha,z,\Sigma)$ for short.
Then, given a Hamiltonian $h\colon \widetilde{\R}^{2d+1} \times \sym(2d,\R) \to \R$, the Hamiltonian system $\dot{\mathbf{m}} = \PB{\mathbf{m}}{h}$ yields
\begin{equation*}
  \dot{\alpha} = 0,
  \qquad
  \dot{z} = \alpha\PB{z}{h}_{\R^{2d}}
  \qquad
  \dot{\Sigma} = \frac{4}{\hbar} \left( \mathbb{J}\Fd{h}{\Sigma}\Sigma - \Sigma\Fd{h}{\Sigma}\mathbb{J} \right).
\end{equation*}
If the Wigner function is normalized, one may set $\alpha = 1$ and so one may restrict the Poisson bracket~\eqref{eq:PB-GaussianWigner1} to $\R^{2d} \times \sym(2d,\R) = \{(z, \Sigma)\}$ to obtain the desired Poisson bracket~\eqref{eq:PB-GaussianWigner}.

\bibliography{GWP-Wigner}
\bibliographystyle{plainnat}

\end{document}